\newtheorem{theorem}{Theorem}[section]
\newtheorem{proposition}[theorem]{Proposition}
\newtheorem{corollary}{Corollary}
\newtheorem{lemma}{Lemma}
{\theorembodyfont{\rmfamily}%
 
  \newtheorem{example}[theorem]{\sc Example}
  
}
\newenvironment{proof}{\noindent\textit{Proof}}
                      {\QED\vskip\theorempostskipamount}
\def\petitcarre{\vrule height4pt width 4pt depth0pt}
\def\endexamplesymbol{$\bullet$}
\def\endex{\relax\ifmmode\eqno{\hbox{\endexamplesymbol}}\else{%
  \unskip\nobreak\hfil\penalty50\hskip2em\hbox{}\nobreak\hfil
\newtheorem{lemma}{Lemma}  \endexamplesymbol
  \parfillskip=0pt \finalhyphendemerits=0\par\smallskip}
  \fi}
\def\QED{\relax\ifmmode\eqno{\hbox{\petitcarre}}\else{%
 \unskip\nobreak\hfil\penalty50\hskip2em\hbox{}\nobreak\hfil
  \petitcarre
  \parfillskip=0pt \finalhyphendemerits=0\par\smallskip}
 \fi}
\def\petitcarre{\vrule height4pt width 4pt depth0pt}
\def\QED{\relax\ifmmode\eqno{\hbox{\petitcarre}}\else{%
  \unskip\nobreak\hfil\penalty50\hskip2em\hbox{}\nobreak\hfil
  \petitcarre
  \parfillskip=0pt \finalhyphendemerits=0\par\smallskip}
  \fi}
\def\Z{\mathbb Z}
\def\C{\mathbb C}
\def\A{\mathcal A}
\def\B{\mathcal B}
\def\V{\mathcal V}
\def\Agoth{{\mathfrak A}}
\newcommand{\GR}{{\cal R}}
\newcommand{\GL}{{\cal L}}
\newcommand{\GH}{{\cal H}}
\newcommand{\GD}{{\cal D}}
\newcommand{\edge}[1]{\stackrel{#1}{\rightarrow}}
\DeclareMathOperator{\End}{End}
\DeclareMathOperator{\Trace}{Tr}
\DeclareMathOperator{\Card}{Card}
\def\u(#1){\underline{#1\!}\,}
\title{Completely reducible sets}
\author{Dominique Perrin\\ LIGM, Universit\'e Paris-Est}
\begin{document}
\maketitle
\begin{abstract}
We study the family of rational sets of words, called completely
reducible and
which are such that the syntactic
representation of their characteristic series is completely
reducible. This family contains,
by a result of Reutenauer, the submonoids generated by bifix
codes and, by a result of Berstel and Reutenauer,
the cyclic sets. We study the closure properties
 of this family. We prove a result
on linear representations of monoids which gives a generalization of
the result concerning the complete reducibility
of the submonoid generated by a bifix code to sets called birecurrent.
 We also give a new proof of the result
concerning cyclic sets.

\end{abstract}
\tableofcontents

\section{Introduction}
The notion of syntactic algebra of a formal series was
introduced by Reutenauer  in~\cite{Reutenauer1980}.
  It is a natural
generalization of the notion of syntactic monoid of a set of words.
This algebra has a natural linear representation called the syntactic
representation of the series, in the same way as the syntactic
monoid has a natural representation by mappings from a set
into itself corresponding to the minimal automaton of the set.

 In the same
way that one uses properties of the syntactic monoid of a set
to define or characterize important classes of sets, it is natural to use
the syntactic algebra to do the same. One of the most elementary
property of a linear
representation is its irreducibility or, more interestingly, its
complete reducibility. The syntactic representation and the syntactic
algebra of a set of words
is those of its characteristic series.
A set of words is called completely reducible
if its syntactic representation  is
completely reducible. This is equivalent to the semisimplicity
of its syntactic algebra.

A remarkable property, also proved by Reutenauer in \cite{Reutenauer1981}
is that, when the field is of characteristic zero, 
the submonoid generated by a rational bifix code
is completely reducible. This can be considered as a generalization
of Maschke's theorem and is one of the arguments showing the strong
connexion between bifix codes and groups. Later, Berstel and Reutenauer proved 
in~\cite{BerstelReutenauer1990} that the sets of words
called cyclic,
are also completely reducible. The proofs given for both cases do not
have much in common. The proof given in~\cite{Reutenauer1981}
for the first result consists in proving that the radical
of the syntactic algebra of the set is zero. Another proof, 
given in \cite{BerstelPerrin1985} 
and in \cite{BerstelPerrinReutenauer2009}, shows  directly that
the syntactic representation of the set
is completely reducible.
This is also the proof presented in~\cite{BerstelReutenauer2011}. The proof of
the other result on cyclic languages uses a decomposition
of the characteristic series as a $\Z$-linear combination
of series also called cyclic.

In this paper, we investigate further the family of completely
reducible sets. We study the closure properties of this family
(Theorem~\ref{theoremVariety}) and prove some necessary and some
sufficient conditions to belong to the family. We characterize
the completely reducible sets on a one letter alphabet (Theorem~\ref{theorem1letter}).
Reworking the proof of the complete reducibility
of submonoids generated by bifix codes,
we prove a result
on linear representations which gives a sufficient
condition for complete reducibility
(Theorem~\ref{theoremCompleteReducibility}).
It is related with the idea of condensation in representation theory,
originally  due to Green \cite{Green2007}, and used in computational
group theory.
We use Theorem~\ref{theoremCompleteReducibility} to
obtain a generalization of the complete reducibility of the
submonoid generated by a bifix code to a class of sets called
birecurrent (Theorem~\ref{theoremBifix}).
 They are defined by the property that the minimal
automata of the set and of its reversal are strongly connected.
We give a proof of
the
complete reducibility of cyclic sets which  uses the notion
of external power of an automaton introduced by B\'eal
in~\cite{Beal1995}, the results on strongly cyclic sets proved
in~\cite{BealCartonReutenauer1996} and the family of series defined
by traces used in the original proof of~\cite{BerstelReutenauer1990}.
We finally relate cyclic sets and monoid characters, based on the
results of McAlistair~\cite{McAlister1972}.

The problem of  characterizing  the completely reducible sets
in terms of operations on sets of words
remains open. It is solved on a one-letter alphabet
and for the class of submonoids generated by
rational maximal codes, since in this case the complete reducibility can only occur
for a bifix code by the result of Reutenauer already mentioned.
Such a characterization should take in account
the characteristic of the field since the complete reducibility
of the submonoid generated by a bifix code is only true when
the characteristic of the field is zero (or more generally
does not divide the order of the group of the bifix code).

The paper is organized as follows. In
Section~\ref{sectionMonoidsMatrices}
we prove the result concerning completely reducible
linear representations
(Theorem~\ref{theoremCompleteReducibility}).
In Section~\ref{SectionSyntactic} we define syntactic representations
and recall some results concerning them. In
Section~\ref{sectionCompletelyReducible}, we prove some closure properties
for the family of completely reducible sets.
We characterize this family on a one-letter alphabet 
(Theorem~\ref{theorem1letter}). In
Section~\ref{sectionBirecurrent},
we give a proof of the complete reducibility of birecurrent sets
(Theorem~\ref{theoremBifix}).
In Section~\ref{sectionCyclic} we give a new proof of the complete
reducibility
of cyclic sets (Theorem~\ref{theoremCyclic}).  We also describe the
connexion with a result of McAlistair on monoid characters~\cite{McAlister1972}.

\paragraph{Acknowledgements} I would like to thank Clelia De Felice for
her active and very important help in the preparation of this article.
Without her, the paper would not exist in this form. I also thank
Benjamin Steinberg for pointing out to me useful references on linear
representations of semigroups. Next, I am indebted to Christophe
Reutenauer for several improvements of the presentation and the
correction
of some mistakes in a preliminary version. Finally, special acknowledgments are
due
to the anonymous referee. He has contributed many improvements of this
paper
by pointing out the connexion with the condensation method,
with the Rhodes radical and also with a result of McAlistair 
on monoid characters.

\section{Completely reducible monoids of
  matrices}\label{sectionMonoidsMatrices}
In the first two parts of this section, we introduce basic notions concerning monoids and
linear representations. For a more detailed exposition, see~\cite{CliffordPreston1961}
or \cite{Lallement1979}. In the last part, we prove a result on
completely reducible representations (Theorem~\ref{theoremCompleteReducibility}) which will
be used in Section~\ref{sectionBirecurrent}.
\subsection{Monoids}

A semigroup is a set with an associative operation.
A monoid is a semigroup with an identity
element
denoted $1$. 

An element $0$ of a monoid $M$ is a \emph{zero} if $m\ne 1$
and for all $m\in M$,
$0m=m0=0$. If $M$ contains a zero, it is unique.

The Green relations on a monoid $M$ are defined as follows. For
$m,n\in M$, one has 
\begin{enumerate}
\item[(i)] $m\GR n$ if $mM=nM$.
\item [(ii)] $m\GL n$ if $Mm=Mn$.
\item[(iii)] $m\GH n$ if $m\GR n$ and $m\GL n$.
\end{enumerate}
It is classical that $\GR$ and $\GL$ commute. One denotes $\GD$ the
relation $\GR\GL=\GL\GR$. The $\GR$-class of $m\in M$ is denoted
$R(m)$
and similarly for $\GL$, $\GH$ and $\GD$.

A $\GD$-class $D$ is \emph{regular} if it contains an idempotent. In this
case,
there is an idempotent in each $\GR$-class and in each $\GL$-class of
$D$.

For any $m,n\in M$ one has $mn\in R(m)\cap L(n)$ if and only if
$R(n)\cap L(m)$ contains an idempotent (Clifford-Miller Lemma).
As a consequence, for any $m,m'$ in the same $\GH$-class $H$,
either $mm'\notin H$ or $H$ is a group.

A \emph{right ideal} (resp. a left ideal, resp. a two-sided ideal)
of a monoid $M$ is a nonempty subset $I$ such that
$IM=I$ (resp. $MI=I$, resp. $MIM=I$). A right ideal
 is \emph{minimal} if it does not contain
any other right ideal of $M$ (and similarly for left
and for two-sided ideals). In a finite monoid, there
is a unique minimal two-sided ideal which is the union of minimal
right ideals (resp. of minimal left ideals). 
When $M$ contains a zero, a right ideal
$I\ne 0$ is $0$-minimal if the only right ideals contained in $I$ are
$0$ and $I$ itself (and similarly for left and two-sided ideals).

\subsection{Linear representations of monoids}

Let $V$ be a vector space over a field $K$ and let $M$ be a submonoid
of the monoid $\End(V)$ of linear functions from $V$ into itself.
A subspace $V'$ of $V$ is \emph{invariant} by $M$ if $V'm\subset V'$
for any $m\in M$.
The monoid $M$ is called \emph{irreducible} if $V\ne 0$ and the only
invariant subspaces are $0$ and $V$.  
Otherwise, $M$ is called \emph{reducible}.

The monoid $M$ is \emph{completely
  reducible} if any invariant subspace has an invariant complement,
i.e. if for any invariant subspace $V'$ of $V$ there is an invariant
subspace $V''$ such that $V$ is the direct sum of $V'$ and $V''$.
If $V$ has finite dimension, a completely reducible submonoid
of $\End(V)$ has the following form.  There exists a decomposition of 
$V$ into a direct sum of invariant subspaces $V_{1}, V_{2},\ldots, V_{k}$,
\begin{displaymath}
  V = V_{1} \oplus V_{2} \oplus \cdots \oplus V_{k}
\end{displaymath}
such that the restrictions of the elements of $M$ to each of the
$V_{i}$'s form an irreducible submonoid of $\End(V_{i})$. 
Any invariant subspace of $V$ is, up to isomorphism, a sum of one or more of the $V_i$.
Conversely, if $V$ is of this form, then $M$ is completely reducible.

In a basis of
$V$ composed of bases of the subspaces $V_{i}$, the matrix of an
element $m$ in $M$ has a diagonal form by blocks,
\begin{displaymath}
  m=\left[\begin{array}{c|c|c|c}
      m_1&\multicolumn{3}{|r}{0}\\ 
      \cline{1-2}
      &m_2&\multicolumn{2}{|}{}\\
      \cline{2-3}
      \multicolumn{2}{c|}{}&\ddots& \\
      \cline{3-4}
      \multicolumn{3}{l|}{0}&m_k
    \end{array}
  \right].
\end{displaymath}
The subspaces $V_i$ are called the \emph{irreducible components}
of $V$ under $M$. The restrictions of $M$ to the subspaces $V_i$
are called the \emph{irreducible constituents} of $M$.

Let $M$ be a monoid and let $V$ be a finite dimensional vector space
over a field $K$. A \emph{linear representation}  of $M$
over $V$ is a morphism $\varphi$ from $M$ into $\End(V)$. A subspace $W$
of $V$ is invariant under $\varphi$ if it is invariant under
$\varphi(M)$.
The representation  is completely
  reducible if the monoid $\varphi(M)$ is completely reducible.
The irreducible components and the irreducible constituents
 of $\varphi$ are those of $\varphi(M)$.

An algebra is said to be \emph{simple} if it has no other two-sided
ideals than $0$ and itself. It is said to be \emph{semisimple} if it
is
a  finite direct product of simple algebras. 

We will use some well-known properties of semisimple algebras 
(see~\cite{CliffordPreston1961}
or \cite{Lang2002} for example).

 First a
quotient
of a  semisimple algebra is  semisimple and a direct product of
semisimple
algebras is semisimple. Note that a subalgebra of a semisimple algebra
is not semisimple in general since any finite dimensional algebra
over a field $K$ is a subalgebra of the algebra $K^{n\times n}$
of $n\times n$-matrices, which is simple.

Next, by Wedderburn's theorem,
every semisimple algebra contains an identity element. 
Finally a nilpotent ideal of a semisimple
algebra is zero.

A representation of an algebra $\Agoth$ over a vector
space $V$ is a morphism $\varphi$ from $\Agoth$  into the
algebra
$\End(V)$. It is \emph{faithful} if $\varphi$ is injective.
The representation is reducible or completely reducible if $\varphi(\Agoth)$
is reducible or completely reducible respectively.

Let $\Agoth$ be an algebra over a field $K$. Let $\varphi$ be a
representation
of $\Agoth$ over a vector space $V$. Then $(v,x)\mapsto v\varphi(x)$
defines a structure of $\Agoth$-module on $V$. Conversely, if $V$ is
an $\Agoth$-module, the map $\varphi:\Agoth\rightarrow \End(V)$ defined
by $v\varphi(x)=vx$ is a linear representation of $\Agoth$ over $V$.
Thus, given $\Agoth$ and $V$, one 
 speaks indifferently of a linear representation of $\Agoth$
over $V$ or of a structure of $\Agoth$-module on $V$.

As well known, the properties of an algebra of being simple or
semisimple correspond to the properties of their representations
to be irreducible or completely reducible respectively. More precisely,
if an algebra has a faithful irreducible representation, then
it is simple. If it has a faithful completely reducible
representation, then it is semisimple. Conversely, any representation
of a simple algebra is irreducible and every representation of
a semisimple algebra is completely reducible (see~\cite{CliffordPreston1961} for example).

We recall that, by Maschke's theorem a linear representation of a
finite
group $G$ over a field $K$ of characteristic not dividing
the order of $G$ is completely reducible.

\subsection{A sufficient condition for complete reducibility}

Let $\Agoth$ be an algebra and $e\ne 0$ be an idempotent of
$\Agoth$. Then $e\Agoth e$
is an algebra. For an $\Agoth$-module
$V$,
the space $Ve$ is an $e\Agoth e$-module called the \emph{condensed module}
of $V$ and $e$ the \emph{condensation idempotent}.
The map from $V$ to $Ve$ is called in \cite{Green2007} 
the \emph{Schur functor} and the following statement is (6.2b).

\begin{proposition}\label{lemmaGreen}
If $V$ is a finite dimensional irreducible $\Agoth$-module
such that $Ve\ne 0$, then $Ve$ is an irreducible $e\Agoth e$-module.
\end{proposition}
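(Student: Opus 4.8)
The plan is to verify irreducibility of $Ve$ directly from the definition: I take an arbitrary nonzero $e\Agoth e$-submodule $W$ of $Ve$ and show $W=Ve$. The argument relies only on the idempotence of $e$ together with the irreducibility of $V$, and in particular it does not require $\Agoth$ to have an identity. The first thing I would record is that $e$ acts as the identity on $Ve$: since $e^2=e$, every $w\in Ve$ satisfies $we=w$. Consequently, for $a\in\Agoth$ the compressed element $eae$ lies in $e\Agoth e$ and acts on $w\in Ve$ by $w(eae)=(we)ae=wae$. This identity, which says that the action of the condensed algebra amounts to applying $a$ and then projecting by $e$, is the hinge of the whole proof.

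Next I would push $W$ up into $V$. Let $U$ be the $\Agoth$-submodule of $V$ generated by $W$, that is, $U=W+W\Agoth$. Since $W\neq 0$, $U$ is a nonzero invariant subspace of $V$, so irreducibility of $V$ forces $U=V$, and hence $Ve=Ue$.

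It then remains to compress back by $e$, and this is the step I expect to require the most care. A typical element of $Ue$ has the form $\bigl(w+\sum_i w_i a_i\bigr)e$ with $w,w_i\in W$ and $a_i\in\Agoth$; using $we=w$ and the identity $w_i a_i e=w_i(ea_ie)$ from the first paragraph, it rewrites as $w+\sum_i w_i(ea_ie)$, which lies in $W$ because $W$ is stable under $e\Agoth e$. Thus $Ve=Ue\subseteq W$, and since $W\subseteq Ve$ by hypothesis we obtain $W=Ve$. The only real subtlety is the bookkeeping in this last rewriting: one must simultaneously use that $e$ fixes the elements of $W$ on the right and that conjugating each $a_i$ into $e\Agoth e$ leaves the product $w_i a_i e$ unchanged, so that the generators of $U$, once multiplied by $e$, fall back inside $W$. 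This is precisely where the hypothesis $Ve\neq 0$ (which lets us choose a nonzero $W$) and the irreducibility of $V$ are both consumed.
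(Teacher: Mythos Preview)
Your proof is correct and follows essentially the same route as the paper's: take a nonzero $e\Agoth e$-submodule $W\subseteq Ve$, push it up to the $\Agoth$-submodule it generates (which is all of $V$ by irreducibility), then compress back by $e$ using $We=W$ and $(W\Agoth)e=W(e\Agoth e)\subseteq W$. The only cosmetic difference is that you write $U=W+W\Agoth$ while the paper uses $W\Agoth$ directly; since $W=We\subseteq W\Agoth$ these coincide, so the arguments are the same.
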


\begin{proof}
Let $W$ be a nonzero $eSe$-submodule of $Ve$. Then $W=We$ since $e$
is idempotent. Moreover $W\Agoth$ is a nonzero $\Agoth$-submodule of $V$,
which implies $W\Agoth=V$ since $V$ is irreducible (here and in the sequel,
we denote by $W\Agoth$ the subspace generated by the $wr$ for $w\in W$
and $r\in \Agoth$). Thus 
\begin{displaymath}
Ve=(W\Agoth)e=(We\Agoth)e=W(e\Agoth e)\subset W,
\end{displaymath}
which proves that $W=Ve$.
\end{proof}
The following statement does not seem to have been explicitly
stated before.

\begin{theorem}\label{theoremCompleteReducibility}
Let $\Agoth$ be a  finite dimensional algebra and let $e\in \Agoth$
be an idempotent. Let $V$ be a finite dimensional $\Agoth$-module.
Then the following are equivalent.
\begin{enumerate}
\item[\rm(i)] $V=\bigoplus_{i=1}^m V_i$ with the $V_i$ irreducible
  $\Agoth$-modules
and $V_ie\ne 0$ for $1\le i\le m$.
\item[\rm(ii)] $Ve$ is completely reducible over $e\Agoth e$, $V=Ve\Agoth$
and $\{v\in V\mid v\Agoth e=0\}=0$.

\end{enumerate}
Moreover, if (i) holds, then $Ve=\bigoplus_{i=1}^m V_ie$ with the
$V_ie$ irreducible $e\Agoth e$-modules.
\end{theorem}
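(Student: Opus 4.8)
The plan is to handle the ``moreover'' clause together with the implication (i)$\Rightarrow$(ii), since a single decomposition serves both, and then to devote the main effort to (ii)$\Rightarrow$(i), which is the substantive direction. Throughout I would lean on Proposition~\ref{lemmaGreen} and on the standard fact that a finite dimensional module which is a sum of simple submodules admits a subfamily whose direct sum is the whole module.

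Assuming (i), write $V=\bigoplus_{i=1}^m V_i$ with each $V_i$ irreducible and $V_ie\ne0$. Because each $V_i$ is a submodule we have $V_ie\subseteq V_i$, so $Ve=\bigoplus_{i=1}^m V_ie$; by Proposition~\ref{lemmaGreen} each $V_ie$ is an irreducible $e\Agoth e$-module, which proves the ``moreover'' clause and shows $Ve$ is completely reducible. Next, $(V_ie)\Agoth$ is a nonzero submodule of the irreducible $V_i$, hence equals $V_i$, so $Ve\Agoth=\sum_i(V_ie)\Agoth=\sum_iV_i=V$. For the third condition I would set $N=\{v\in V\mid v\Agoth e=0\}$, note it is a submodule (since $(va)\Agoth e\subseteq v\Agoth e$), and observe $Ne=0$ by taking the term $a=e$. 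If $N\ne0$ it contains a simple submodule $W$, which is isomorphic to some $V_i$ by semisimplicity; as an $\Agoth$-isomorphism commutes with right multiplication by $e$, the equality $We=0$ would force $V_ie=0$, contradicting (i). Hence $N=0$ and (ii) holds.

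For (ii)$\Rightarrow$(i) I would reverse the process by \emph{lifting} a decomposition of the condensed module. Decompose the completely reducible module $Ve=\bigoplus_{j=1}^m U_j$ into irreducible $e\Agoth e$-modules and set $V_j=U_j\Agoth$. Since $e$ acts as the identity on $Ve$ (every $ve$ satisfies $ve\cdot e=ve$), one has $U_je=U_j$, hence $U_j=U_je\subseteq U_j\Agoth=V_j$. Using $u=ue$ for $u\in U_j$ gives $V_je=(U_j\Agoth)e\subseteq U_j(e\Agoth e)\subseteq U_j$, while $U_j=U_je\subseteq V_je$; thus $V_je=U_j\ne0$. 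From $V=Ve\Agoth$ we get $V=\sum_jV_j$. The heart of the argument is that each $V_j$ is irreducible: given a nonzero submodule $W\subseteq V_j$, the hypothesis $N=0$ provides some $w\in W$ with $w\Agoth e\ne0$, and since $w\Agoth\subseteq W$ this yields $0\ne We\subseteq V_je=U_j$; as $We$ is an $e\Agoth e$-submodule of the irreducible $U_j$ (because $(we)(eae)=(wea)e\in We$), we get $We=U_j$ and therefore $W\supseteq We\Agoth=U_j\Agoth=V_j$, so $W=V_j$. Finally, $V=\sum_jV_j$ is a sum of simple submodules, so some subfamily $S$ satisfies $V=\bigoplus_{j\in S}V_j$; applying $e$ gives $\bigoplus_{j\in S}U_j=Ve=\bigoplus_{j=1}^mU_j$, which forces $S=\{1,\dots,m\}$ because the nonzero $U_j$ are independent. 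This proves (i), together with the matching identity $Ve=\bigoplus_jV_je$.

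I expect the main obstacle to be the irreducibility of $V_j$ in the direction (ii)$\Rightarrow$(i), and more precisely the role of the hypothesis $\{v\mid v\Agoth e=0\}=0$. Without it a nonzero submodule $W\subseteq V_j$ could be \emph{invisible} to condensation, that is, satisfy $We=0$, and then the collapse $W=V_j$ would fail; the condition is exactly what guarantees that every nonzero submodule contains an element outside $N$, so that $We\ne0$ and Proposition~\ref{lemmaGreen}-type reasoning applies. The remaining bookkeeping---that $e$ acts as the identity on $Ve$, that $U_j\subseteq V_j$, and that pushing $e$ through direct sums is harmless---is routine and I would verify it inline.
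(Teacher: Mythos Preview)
Your proof is correct. The direction (i)$\Rightarrow$(ii) and the ``moreover'' clause match the paper's argument closely; your treatment of $N=\{v\mid v\Agoth e=0\}$ via a simple submodule isomorphic to some $V_i$ is a slight variant of the paper's observation that $N$, being invariant in a semisimple module, is a sum of some of the $V_i$.

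For (ii)$\Rightarrow$(i) you take a genuinely different route. The paper proceeds abstractly: given an arbitrary invariant subspace $V'\subseteq V$, it condenses to $W'=V'e$, picks an $e\Agoth e$-complement $W''$, and checks that $V''=W''\Agoth$ is an $\Agoth$-complement of $V'$ (using $V=Ve\Agoth$ for $V'+V''=V$ and $\{v\mid v\Agoth e=0\}=0$ for $V'\cap V''=0$); once $V$ is known to be completely reducible, the condition $V_ie\ne0$ follows from $N=0$. You instead \emph{lift} a fixed irreducible decomposition $Ve=\bigoplus_j U_j$ explicitly, define $V_j=U_j\Agoth$, and prove directly that each $V_j$ is irreducible with $V_je=U_j$, then use the standard refinement of a sum of simples to a direct sum together with a dimension comparison after applying $e$. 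The paper's argument is shorter and shows more, namely that \emph{every} invariant subspace has an invariant complement of the specific form $W''\Agoth$; your argument is more constructive and makes the inverse to the Schur functor in this situation completely explicit, exhibiting the bijection $U_j\leftrightarrow U_j\Agoth$ between irreducible summands of $Ve$ and of $V$.
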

\begin{proof}
Assume first that (i) holds. 
Then $Ve=\bigoplus_{i=1}^m V_ie$. Each of
the $V_ie$ is an irreducible  $e\Agoth e$-submodule of $V_i$
by Proposition~\ref{lemmaGreen}. So $Ve$ is completely reducible. Also
$Ve\Agoth=\sum_{i=1}^mV_ie\Agoth=V$ because $V_ie\Agoth=V_i$ as $V_i$ is irreducible.
Finally, let $W=\{v\in V\mid v\Agoth e=0\}$. Since $W$ is invariant, it is
isomorphic
to a direct sum of some of the $V_i$. But $We=0$. If $W\ne 0$, this contradicts
$V_ie\ne 0$ for all $i$. Thus (ii) holds and also the final assertion
of the theorem.

Next assume that (ii) holds. Let $V'$ be an invariant subspace of $V$.
Then $W'=V'e$ is a  subspace of $Ve$ invariant by $e\Agoth e$. Thus it has a
complement $W''$ in $Ve$ invariant by $e\Agoth e$. 
Then $V''=W''\Agoth$ is an invariant subspace which is a complement of $V'$. Indeed,
since $V=Ve\Agoth$ and $Ve=W'+W''$, we have $V=Ve\Agoth=W'\Agoth+W''\Agoth$. 
Since $W'\Agoth=V'e\Agoth$, the first term
is included in $V'$.  Thus $V=V'+V''$.
 Next if $v\in V'\cap V''$,
then  $v\Agoth e\subset W'\cap W''$ and thus $v\Agoth e=0$ which implies $v=0$
by the last assertion of (ii). This shows that $V'\cap V''=0$
and thus that $V=V'\oplus V''$.

Thus $V$ is completely reducible. Set $V=\bigoplus_{i=1}^mV_i$
with the $V_i$ irreducible $\Agoth$-modules. If $V_ie=0$, then any $v\in
V_i$
is in the set $\{v\in V\mid v\Agoth e=0\}$. Thus $V_i=0$, a contradiction.
\end{proof}
We shall use the following consequence of
Theorem~\ref{theoremCompleteReducibility} 
in Section~\ref{sectionBirecurrent} (see also the examples given there).
\begin{corollary}\label{corollaryCR}
Let $\Agoth$ be a finite dimensional algebra and $e\in\Agoth$
be an idempotent. Let $V$ be finite dimensional \textcolor{red}{$\Agoth$}-module
such that $Ve$ is completely reducible over $e\Agoth e$, $V=Ve\Agoth$
and $\{v\in V\mid v\Agoth e=0\}=0$. Then $V$ is completely
reducible.
\end{corollary}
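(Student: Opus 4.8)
The plan is to observe that the corollary is nothing more than the implication (ii)~$\Rightarrow$~(i) of Theorem~\ref{theoremCompleteReducibility}, combined with the fact that a direct sum decomposition into irreducible submodules is exactly what ``completely reducible'' means. Indeed, the three hypotheses imposed on $V$ in the corollary---that $Ve$ be completely reducible over $e\Agoth e$, that $V=Ve\Agoth$, and that $\{v\in V\mid v\Agoth e=0\}=0$---are verbatim the three conditions forming statement (ii) of the theorem (the reference to an ``$R$-module'' being a slip for an $\Agoth$-module).

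So first I would apply Theorem~\ref{theoremCompleteReducibility} to the same algebra $\Agoth$, the same idempotent $e$, and the same finite dimensional module $V$. Because the hypotheses of the corollary coincide with condition (ii), the equivalence furnished by the theorem immediately yields condition (i): there is a decomposition $V=\bigoplus_{i=1}^m V_i$ with each $V_i$ an irreducible $\Agoth$-module. The extra information $V_ie\ne 0$ recorded in (i) is not needed for the conclusion of the corollary, though it comes for free.

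Then I would invoke the description of complete reducibility recalled in the preliminaries on linear representations: a finite dimensional module which is a direct sum of irreducible submodules is completely reducible, and conversely. Since (i) exhibits precisely such a decomposition of $V$, we conclude that $V$ is completely reducible, which is exactly the assertion of the corollary.

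There is no genuine obstacle here: the only content is recognizing that the stated hypotheses reproduce (ii) and that (i) is, by definition, a statement of complete reducibility. The corollary is therefore an immediate consequence of the harder direction already established in Theorem~\ref{theoremCompleteReducibility}, and is recorded separately only because it is the precise form in which the result will be applied in Section~\ref{sectionBirecurrent}.
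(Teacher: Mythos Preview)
Your proposal is correct and matches the paper's approach exactly: the paper states the corollary without proof, simply calling it a ``consequence of Theorem~\ref{theoremCompleteReducibility}'', which is precisely the observation you make that the hypotheses are condition (ii) and that condition (i) is by definition complete reducibility.
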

\section{Automata and syntactic representations}\label{SectionSyntactic}
In this section, we introduce the basic terminology concerning automata and
syntactic
representations. For a more detailed exposition, see~\cite{Eilenberg1974} or \cite{BerstelReutenauer2011}.
\subsection{Words and formal series}
Let $A$ be a finite set called an alphabet.
We denote by $A^*$ the set of words on $A$ and by $A^+$ the set of
nonempty words.

For a word $w=a_1a_2\cdots a_n$ with $a_i\in A$, we denote by
$\tilde{w}$
the \emph{reversal} of $w$. By definition $\tilde{w}=a_n\cdots
a_2a_1$. By convention $\tilde{1}=1$.
For a set $X$ of words, the reversal of $X$ is the set 
$\tilde{X}=\{\tilde{x}\mid x\in X\}$.

For two words $x,y\in A^*$, we define $x^{-1}y=z$  if $y=xz$ and 
$x^{-1}y=\emptyset$ otherwise. Symmetrically $xy^{-1}=z$ if $x=zy$ and 
$xy^{-1}=\emptyset$ otherwise. The notation is extended to sets by linearity.
Thus for example $x^{-1}Y=\{z\in A^*\mid xz\in Y\}$.

A word $v$ is a factor of a word $x$ if $x=uvw$ for some words $u,w$.
For a set $X$ of words, we denote by $F(X)$ the set of factors of
the words of $X$.

Let $K$ be a field.
A \emph{formal series} $S$ on the alphabet $A$ with coefficients in $K$ is
a map $S:A^*\rightarrow K$. For $w\in A^*$, we denote by $(S,w)$ the
value of $S$ on $w$. The value $(S,1)$ is called the \emph{constant
  term}
of $S$.

The sum of $S$ and $T$ is the series defined
by $(S+T,w)=(S,w)+(T,w)$. Likewise, for $\alpha\in K$, the series
$\alpha S$ is defined by $(\alpha S,w)=\alpha(S,w)$. In this way
the set of formal series becomes a vector space.

We denote by $K\langle A\rangle$ the free algebra on $A$. Its
elements, called \emph{polynomials}, are formal series such that all but a finite number of
the coefficients are $0$. When the alphabet has one letter $a$, we use
the traditional notation $K[a]$ rather than $K\langle a\rangle$.

For a set $X\subset A^*$, we denote by $\u(X)$ the characteristic
series of $X$, which is defined by $(\u(X),x)=1$ if $x\in X$
and $0$ otherwise.

 Let $n\ge 1$ be an integer. Let
$\lambda$
be a row $n$-vector, let $\mu$ be a morphism from $A^*$ into the monoid
of $n\times n$-matrices and let $\gamma$ be a column $n$-vector,
all with coefficients in $K$. The  triple $(\lambda,\mu,\gamma)$
is said to be a \emph{linear representation} of a series $S$ if for
any word $w$,
\begin{displaymath}
(S,w)=\lambda \mu(w)\gamma.
\end{displaymath}
We also say that  $(\lambda,\mu,\gamma)$ \emph{recognizes} $S$. The
vector $\lambda$ is called the \emph{initial vector} and $\gamma$
the \emph{terminal vector}.
The series $S$ is said to be \emph{rational} if it has a linear
representation.

 We say that a morphism
 $\psi$
 from
the free algebra $K\langle A\rangle$ into  an algebra $\Agoth$
\emph{recognizes} a series $S$ if there is  a linear map
 $\pi:\Agoth\rightarrow K$ such that
$(S,w)=\pi(\psi(w))$ for all $w\in A^*$.

Let $S$ be a rational series and let $(\lambda,\mu,\gamma)$ be
a linear representation of $S$. Then $\mu$ extends to a morphism
from $K\langle A\rangle$ into the algebra $K^{n\times n}$
of $n\times n$-matrices
with coefficients in $K$. This morphism recognizes $S$ since
the linear map  $\pi:K^{n\times n}\rightarrow K$ defined by
$\pi(m)=\lambda m\gamma$ satisfies
$(S,w)=\pi(\mu(w))$ for any $w\in A^*$. 

Conversely, one can recover a linear representation of 
a rational series $S$ from
a morphism $\psi$ into a finite dimensional algebra $\Agoth$
recognizing $S$. Indeed, let us choose a basis of 
$\Agoth$. Then the map $x\rightarrow x\psi(w)$ from $\Agoth$ into
itself
is linear.
It is represented by an $n\times n$-matrix
$\mu(w)$. Let $\lambda$ be the $n$-vector representing $\psi(1)$
in this basis. Let $\gamma$ be a column $n$-vector
$\gamma$ such that $\pi(x)=x\gamma^t$ for any $x$ in $\Agoth$.
Then $\lambda\mu (w)\gamma=(S,w)$ for all $w\in A^*$,
showing that $(\lambda,\mu,\gamma)$ is a linear representation
of $S$.

This shows the following useful equivalent definition of rational
series.

\begin{proposition}\label{propEquivDefRationalSeries}
A series is rational if and only if it can
be recognized by a morphism into a finite dimensional algebra.
\end{proposition}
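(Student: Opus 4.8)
The plan is to prove the two implications separately, essentially packaging the two constructions sketched in the paragraphs preceding the statement. Both directions rest on the single observation that the algebra $K^{n\times n}$ of $n\times n$-matrices is finite dimensional, so that a linear representation and a morphism into a finite dimensional algebra are merely two descriptions of the same data.

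For the implication from rational to morphism-recognized, I would start from a linear representation $(\lambda,\mu,\gamma)$ of $S$ with $\mu$ valued in $n\times n$-matrices. The morphism $\mu$ extends uniquely to an algebra morphism $K\langle A\rangle\to K^{n\times n}$, and $K^{n\times n}$ has dimension $n^2$, hence is finite dimensional. It then suffices to exhibit the required linear map $\pi$. I would take $\pi:K^{n\times n}\to K$ defined by $\pi(m)=\lambda m\gamma$; this is linear, and the defining identity $(S,w)=\lambda\mu(w)\gamma$ of the linear representation reads exactly as $(S,w)=\pi(\mu(w))$, so $\mu$ recognizes $S$.

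For the converse, suppose $\psi:K\langle A\rangle\to\Agoth$ recognizes $S$ through a linear map $\pi$, with $\Agoth$ of finite dimension $n$. The idea is to realize $\Agoth$ concretely by its right regular representation. After fixing a basis of $\Agoth$, I would associate to each word $w$ the matrix $\mu(w)$ of the linear map $x\mapsto x\psi(w)$ given by right multiplication. The point to check is that $\mu$ is multiplicative: since $\psi(uv)=\psi(u)\psi(v)$, right multiplication by $\psi(uv)$ is the composite of right multiplication by $\psi(u)$ followed by right multiplication by $\psi(v)$, which in the row-vector convention corresponds to the matrix product $\mu(u)\mu(v)$, so $\mu$ is a morphism into $K^{n\times n}$.

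It then remains to choose the boundary vectors. I would let $\lambda$ be the coordinate row vector of $\psi(1)$, and $\gamma$ the column vector representing the linear functional $\pi$ in the chosen basis, so that $\pi(x)$ is the pairing of the coordinate vector of $x$ with $\gamma$. Tracing through the construction, $\lambda\mu(w)$ is the coordinate vector of $\psi(1)\psi(w)=\psi(w)$, and pairing with $\gamma$ yields $\pi(\psi(w))=(S,w)$. Hence $(\lambda,\mu,\gamma)$ is a linear representation of $S$, so $S$ is rational. I expect the only real obstacle to be bookkeeping: keeping the left/right action and the order of matrix multiplication consistent with the row-vector conventions used throughout, so that $\lambda\mu(w)\gamma$ evaluates to $\pi(\psi(w))$ and not to a transposed or reversed quantity.
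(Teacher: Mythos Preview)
Your proposal is correct and follows essentially the same route as the paper: the paper's proof is precisely the two paragraphs preceding the statement, which you have reproduced with a bit more detail (in particular the verification that $\mu$ is multiplicative). The only cosmetic difference is notation for the pairing defining $\gamma$.
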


\begin{example}
Let $X=(a^2)^*$ and let $S$ be the series $S=\u(X)$. Then $S$ is
recognized
by the linear representation $(\lambda,\mu,\gamma)$ with
$\lambda=[1\ 0]$, $\gamma=[1\ 0]^t$ and $\mu$ defined by
$\mu(a)=\begin{bmatrix}0&1\\1&0\end{bmatrix}$. It is also
recognized by the morphism $\mu$ with the linear map
from $K^{2\times 2}$ in $K$ defined by $\pi(x)=x_{1,1}$.
\end{example}

\subsection{Automata}
An automaton $\A=(Q,I,T)$ on the alphabet $A$
is composed with a finite set $Q$ of states, a set $I\subset Q$
of initial states, a set
$T\subset Q$ St terminal states and a set $E\subset Q\times A\times Q$
of edges. If $(p,a,q)$ is an edge we say that it starts at $p$, it
ends at $q$ and that $a$ is its label. 
Two edges $(p,a,q)$ and
$(p',a',q')$
are consecutive if $q=p'$.
A \emph{path} from $p$ to $q$ in the automaton is a sequence 
$c:p\edge{a_1}q_1\rightarrow\cdots\rightarrow q_{n-1}\edge{a_n}q$
of consecutive edges. The word
$w=a_1\cdots a_n$ is its label. We denote such a path $c:p\edge{w}q$.
A word $w$ is recognized by the automaton $\A$ if there is a path
labeled $w$ from a state in $I$ to a state in $T$. Two
automata are called \emph{equivalent} if they recognize the same set
of words.

A set of words $X$ is \emph{rational} if it is the set of words
recognized by an automaton.

An automaton $\A=(Q,I,T)$ is \emph{deterministic} if $\Card(I)\le 1$ and for each
$p\in Q$ and each $a\in A$ there is at most one edge starting at
$p$ and labeled $a$. In this case, there is a partial map from $Q\times A$ to
$Q$
denoted $(q,a)\mapsto q\cdot a$. The maps  $(q,a)\mapsto q\cdot a$
are called the \emph{transitions} of the automaton.
 This map is extended to a partial
map from $Q\times A^*$ to $Q$ also denoted $(q,w)\mapsto q\cdot w$
for $q\in Q$ and $w\in A^*$.

A deterministic automaton with a unique initial state $i$ will be
denoted
$\A=(Q,i,T)$. This notation implies in particular that $Q$ is not
empty.
Unless otherwise stated, all automata considered in this paper are
deterministic.

Let  $\A=(Q,i,T)$ be a deterministic automaton. A state $q\in Q$ is
\emph{accessible} if there is a word $u$ such that $i\cdot u=q$ and
\emph{coaccessible} if there is a word $v$ such that $q\cdot v\in T$.
The automaton is \emph{trim} if every state is accessible and
coacessible.
For any automaton, $\A=(Q,i,T)$ the automaton obtained by suppressing
all states which are not accessible is called the
\emph{accessible part} of $\A$. The automaton obtained
by suppressing all states which are not accessible 
and coacessible is the \emph{trim
part} of $\A$. Both automata are equivalent to $\A$.

Any automaton can be converted into a deterministic equivalent one.
 Indeed, let $\A=(Q,I,T)$ be an automaton with a set $E$ of edges. Let $\B$ be the
automaton
having as states the nonempty subsets of $Q$. Its transitions are defined, for
$U\subset Q$ and $a\in A$, by
$U\cdot a=\{q\in Q\mid (u,a,q)\in E \text{ for some $u\in U$}\}$ if
this set
is nonempty.
Using the set $I$ as initial state and the family ${\cal T}=\{U\subset
Q\mid U\cap T\ne\emptyset\}$ as set of terminal states, one obtains a
deterministic automaton equivalent to $\A$. The
above
construction is called the \emph{subset construction}. The automaton
obtained by taking the accessible part of the result is said to be
obtained by the \emph{accessible subset construction}.

\begin{example}
Let $\A$ be the automaton represented in Figure~\ref{figBidelay1} on the left.
The initial state is $1$ which is the unique terminal state. An
initial
state is indicated by an incoming edge and a terminal state by an
outgoing one. The automaton is not deterministic because there are two
edges
labeled $a$ going out of state $1$. The result of the accessible subset
construction
is indicated in Figure~\ref{figBidelay1} on the right.
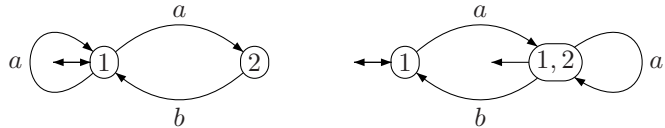
\begin{figure}[hbt]
\centering\gasset{Nadjust=wh}
\begin{picture}(60,10)(0,-5)
\put(0,0){
\begin{picture}(20,10)
\node[Nmarks=if,fangle=180](1)(0,0){$1$}\node(2)(20,0){$2$}

\drawloop[loopangle=180](1){$a$}\drawedge[curvedepth=5](1,2){$a$}\drawedge[curvedepth=5](2,1){$b$}
\end{picture}
}
\put(40,0){
\begin{picture}(20,10)
\node[Nmarks=if,fangle=180](1)(0,0){$1$}\node[Nmarks=f,fangle=180](12)(20,0){$1,2$}

\drawloop[loopangle=0](12){$a$}\drawedge[curvedepth=5](1,12){$a$}
\drawedge[curvedepth=5](12,1){$b$}
\end{picture}
}
\end{picture}
\caption{A nondeterministic automaton and the result of the accessible
  set construction.}\label{figBidelay1}
\end{figure}
\end{example}
We will occasionally consider  a  more general notion of automaton,
called a \emph{weighted automaton}. Let $K$ be a field.
A weighted automaton $\A=(Q,I,T)$ with weights in $K$ is given by 
two maps $I,T:Q\rightarrow K$ and a map $E:Q\times a\times
Q\rightarrow K$.
If $E(p,a,q)=k\ne 0$, we say that $(p,a,q)$ is an edge with label
$a$ and weight $E(p,a,q)$ and we write $p\edge{ka}q$. 

For a given automaton $\A$, any word $w\in A^*$ defines a partial
map $\varphi_\A(w):q\mapsto q\cdot w$ from $Q$ into itself.
The monoid $\varphi_\A(A^*)$ is the
\emph{transition monoid} of the automaton.

The \emph{minimal automaton} of a set $X\subset A^*$ is the automaton
with set of states the nonempty sets $u^{-1}X$ for $u\in A^*$ and
the transitions $(u^{-1}X,a)\mapsto (ua)^{-1}X$. The state $1^{-1}X=X$
is the initial state and the set of final states is the set of
$u^{-1}X$
such that $u\in X$ or, equivalently, $1\in u^{-1}X$. The minimal
automaton of $X$ is trim and recognizes $X$.

The minimal automaton of the empty set has an empty set of
states. In all other cases, the minimal automaton has a nonempty
set of states and a unique initial state.

Let $X\subset A^*$. The set of \emph{contexts} of a word $w$ is the set
$C(w)=\{(u,v)\in A^*\mid uwv\in X$\}. The \emph{syntactic congruence}
is the equivalence  defined by $w\equiv w'$ if
$C(w)=C(w')$ and the \emph{syntactic morphism} is the corresponding
morphism.
 The \emph{syntactic monoid}
of $X$ is the quotient of $A^*$ by the syntactic congruence. 
One sometimes needs to use rather the
 \emph{syntactic semigroup} of a set $X\subset A^+$ which
is the quotient of $A^+$ by the syntactic congruence.

The syntactic monoid of a set $X$ is isomorphic with the transition
monoid of the minimal automaton of $X$.

Let $\A=(Q,i,T)$ be a deterministic automaton and let $K$ be a field. 
The \emph{linear
  representation
associated} with $\A$ is the morphism from $A^*$ into the monoid of
$Q\times Q$-matrices with coefficients in $K$ defined for $a\in A$ by
\begin{displaymath}
\mu(a)_{p,q}=\begin{cases}1&\text{if $p\cdot a=q$}\\0&\text{otherwise}
\end{cases}
\end{displaymath}
If $\A$ is a weighted automaton, we set
$\mu(a)_{p,q}=\sum_{p\edge{ka}q}k$ for each $a\in A$ and $p,q\in Q$.
The \emph{trace} of a word $w$ with respect to the automaton $\A$
is the trace of the matrix $\mu(w)$.

The characteristic series of a rational set of words is a rational
series. Indeed, let $\A=(Q,i,T)$ be an automaton recognizing $X$.
Let $\lambda$ be the characteristic row $Q$-vector of $i$, let
$\mu$ be the linear representation associated with $\A$ and
let $\gamma$ be the characteristic column $Q$-vector of the set $T$.
Then $(\lambda,\mu,\gamma)$ is a linear representation of $\u(X)$.

A deterministic automaton $\A=(Q,i,T)$ is \emph{strongly connected}
if for any $p,q\in Q$ there is a word $w$ such that $p\cdot w=q$.

The following statement is well known (see for example~\cite{Lallement1979}
Proposition 8.2.5
or~\cite{BerstelPerrinReutenauer2009} Proposition 1.12.9). For the
reader's
convenience, we include the proof.
\begin{proposition}\label{propositionMinimalIdeal}
Let $\A=(Q,i,T)$ be a deterministic strongly connected automaton. The transition
monoid $M$ of $\A$ has a unique $0$-minimal or minimal two-sided
ideal $D$ according
to the case where $M$ has a zero or not, formed of the elements
of nonzero minimal rank. It is a regular $\GD$-class. For any
$m\in D$, either $m^2=0$ or the $\GH$-class of $m$ is a group.
\end{proposition}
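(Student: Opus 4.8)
The plan is to grade the elements of $M$ by \emph{rank}. Regarding each $m\in M$ as the partial map $q\mapsto q\cdot m$ of $Q$ into itself, set $\mathrm{rank}(m)=\Card(Q\cdot m)$, the size of its image. The first observation is the submultiplicativity $\mathrm{rank}(mn)\le\min(\mathrm{rank}(m),\mathrm{rank}(n))$, which holds because $Q\cdot mn=(Q\cdot m)\cdot n$ is contained in $Q\cdot n$ and has at most $\Card(Q\cdot m)$ elements. I would also note that $M$ has a zero exactly when it contains the nowhere-defined map, i.e. an element of rank $0$: a zero $z$ satisfies $(Q\cdot z)\cdot m=Q\cdot z$ for all $m\in M$, so its image is fixed pointwise by $M$, which strong connectivity forbids once $\Card(Q)\ge 2$; hence $Q\cdot z=\emptyset$. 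Let $r\ge 1$ be the least nonzero rank occurring in $M$ and put $D=\{m\in M\mid\mathrm{rank}(m)=r\}$.

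Next I would check that $D$, together with $0$ when the latter exists, is a two-sided ideal. For $m\in D$ and $n\in M$ both $mn$ and $nm$ have rank at most $r$; as $r$ is the least nonzero rank, each of them has rank $0$ or $r$, so lies in $D\cup\{0\}$, giving $MDM\subseteq D\cup\{0\}$. A first use of strong connectivity then shows that every nonzero ideal $I$ already meets $D$: if the least nonzero rank in $I$ were some $r'>r$, pick $s\in I$ of rank $r'$ and $d\in D$; every $sxd$ with $x\in M$ lies in $I$ and has rank $\le r<r'$, hence $sxd=0$, so $sMd=0$; but choosing $q\in Q\cdot s$, $t\in\mathrm{dom}(d)$ and, by strong connectivity, $x$ with $q\cdot x=t$ makes $sxd$ defined at a witnessing state, a contradiction. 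Thus $r'=r$ and $I\cap D\neq\emptyset$.

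The heart of the proof, and the step I expect to be the main obstacle, is to show that $D$ is a single $\GD$-class; granting this, any $d\in D$ generates $MdM=D\cup\{0\}$, so $D\cup\{0\}$ (resp.\ $D$) is contained in every nonzero ideal and is therefore the unique $0$-minimal (resp.\ minimal) two-sided ideal. To prove it I would take $d,d'\in D$ and, as in the previous paragraph, use strong connectivity to produce $x$ with $dxd'\neq0$; then $dx\in D$ and $dxd'\in D$. It remains to link the three elements by Green's relations \emph{inside} $D$, for which the key lemma is: if $d\in D$ and $dx\in D$ then $dx\,\GR\,d$, together with its left dual. Minimality already forces $x$ to be defined and injective on the image $Q\cdot d$; the delicate point—where strong connectivity is indispensable and which is the real obstacle—is to produce a \emph{return} element $z\in M$ acting as the identity of $Q\cdot d$ along $x$, so that $dxz=d$ and hence $d\in dxM$. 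Granting the lemma and its dual, $d\,\GR\,dx\,\GR\,dxd'\,\GL\,d'$, whence $d\,\GD\,d'$.

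Finally I would settle regularity and the last dichotomy. The class $D$ is not null: for $d,d'\in D$, choosing $x$ with $dxd'\neq0$ as above gives $dx\in D$ with $(dx)d'\neq0$. Being a non-null $0$-minimal (resp.\ minimal) ideal, $D\cup\{0\}$ is completely $0$-simple (resp.\ completely simple), so it is regular and, by the quoted property of regular $\GD$-classes, carries an idempotent in each of its $\GR$- and $\GL$-classes. For the dichotomy, let $m\in D$. If $m^2\neq0$ then $\mathrm{rank}(m^2)=r$, and since $Q\cdot m^2=(Q\cdot m)\cdot m\subseteq Q\cdot m$ with equal cardinalities we get $(Q\cdot m)\cdot m=Q\cdot m$; thus $m$ permutes its image $R=Q\cdot m$, every power $m^k$ is nonzero with image $R$, and some power $e=m^t$ with $m^t|_R=\mathrm{id}_R$ is a nonzero idempotent satisfying $em=me=m$. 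Then $e\in H(m)$ and $e^2=e\in H(m)$, so the quoted consequence of the Clifford--Miller lemma gives that $H(m)$ is a group. Otherwise $m^2=0$. This is exactly the asserted alternative, completing the proof.
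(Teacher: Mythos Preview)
Your overall strategy matches the paper's: grade $M$ by rank, set $D$ to be the elements of minimal nonzero rank, show $D$ is a single $\GD$-class via the lemma ``$d\in D$ and $dx\in D$ imply $d\,\GR\,dx$'', and finish with Clifford--Miller. The difference is that the paper actually proves this lemma, whereas you flag it as ``the real obstacle'' and then grant it. That is the genuine gap in your proposal, and the way you have framed the missing step may not lead anywhere: knowing that $x$ is injective on $Q\cdot d$ does not by itself produce a single monoid element $z$ inverting $x$ on that set, since strong connectivity lets you route individual states but gives no obvious way to synchronize $r$ separate inversions into one element of $M$.

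The paper's device is to avoid inverting $x$ directly. In your notation: since $dx\in D$ and $d\in D$, strong connectivity (your same argument) provides $y\in M$ with $dxyd\ne 0$; set $z=xyd$. Then $Q\cdot z\subseteq Q\cdot d$, so $z$ maps the image $I=Q\cdot d$ into itself, and since $dz=dxyd$ has rank $r$ the restriction $z|_I$ is a permutation of $I$. By finiteness some power $z^k$ is the identity on $I$, hence $dz^k=d$; writing $z^k=x\cdot w$ with $w=yd(xyd)^{k-1}$ gives $dxw=d$ and thus $dx\,\GR\,d$. With this lemma and its left dual in hand, the rest of your outline goes through. Your handling of the final dichotomy, building the idempotent $e=m^t$ explicitly from the permutation $m|_R$, is in fact a little more concrete than the paper's, which instead deduces $m^2=0$ from the $0$-minimality of $mM$ already established.
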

\begin{proof}
We prove the statement when $M$ has a zero which is the empty map. The other case is similar.
Let $r$ be the minimal nonzero rank of the elements of $M$
and let $D$ be the set of elements of rank $r$.

The set $D\cup 0$  is clearly a two-sided ideal.
We first remark that for any $m,n\in D$, there is an $u\in M$ such that $mun\ne 0$.
Indeed, let $p,q,r,s\in Q$ be such that $pm=q$ and $rn=s$.
Since $\A$  is strongly connected there is an element $u\in M$ such
that $qu=r$. Then $pmun=s$ and thus $mun\ne 0$. 

Let us first show that for each $m\in D$, the right ideal $mM$ is
$0$-minimal. Let $u\in M$ be such that $mu\ne 0$. By the above remark, 
there exists $v\in M$ such that $muvm\ne 0$. Let $I=Qm$ be the image
of $m$ and let $z=uvm$. Since
$z\in Mm$, we have $Iz\subset I$. And since $z\ne 0$,
$Iz\subset I$ implies $Iz=I$ by minimality. Thus there is an integer
$k\ge 1$ such that $z^k$ is the identity on $I$. Set $e=z^k$ and
$w=vmz^{k-1}$. Since $e$ is the identity on $Qm$, one has $me=m$. Thus
$m=muw$ showing that $mu\in R(m)$. 

Let $m,n\in D$.
By the above remark there is $u\in M$ such that
$mun\ne 0$. Since $mM$ is a $0$-minimal right ideal, there
exists $v\in M$ such that $munv=m$. Thus $mun\in R(m)$.
The proof that $mun\in L(n)$ is symmetrical. This shows that $D$
is a $\GD$-class. Since two elements in the same $\GD$-class generate
the same two-sided ideal, $D$ is a $0$-minimal two-sided ideal.

Set $n'=un$. Then $mn'\in R(m)\cap L(n')$ implies by Clifford-Miller
Lemma
that $R(n')\cap L(m)$ contains an idempotent. 
Thus $D$ is a regular $\GD$-class. 

For any $m\in D$ either $m^2\in H(m)$ and $H(m)$ is a group by 
Clifford-Miller
Lemma, or $m^2\notin H(m)$. In this case, since $mM$ is a $0$-minimal
right ideal, we cannot have $m^2\ne 0$. Thus $m=0$.
\end{proof}
Note that if the transition monoid $M$ of a strongly connected
automaton contains a zero, then it is the empty map.
\subsection{Syntactic representations}

Let $S$ be a formal series.
For $u\in A^*$, we denote by
$S\cdot u$ the series defined by $(S\cdot u,v)=(S,uv)$. The following
formulas hold
\begin{displaymath}
S\cdot 1=S,\quad (S\cdot u)\cdot v=S\cdot uv.
\end{displaymath}
The \emph{syntactic space} of $S$, denoted
$V_S$, is the vector space generated by the series $S\cdot u$ for $u\in
A^*$. The \emph{syntactic representation} of $S$ is the morphism
$\psi_S: K\langle A\rangle\rightarrow \End(V_S)$  defined for $x\in V_S$ and $u\in A^*$ by 
\begin{displaymath}
x\psi_S(u)=x\cdot u
\end{displaymath}

The \emph{syntactic algebra} of $S$, denoted $\Agoth_S$,
 is the image of the free algebra
$K\langle A\rangle$ by the syntactic representation. 
The syntactic algebra of $S$
can also be defined directly as follows. Denote by $p\mapsto (S,p)$
the extension of $S$ to the free algebra on $A$. Then $\Agoth_S$ is
the quotient of the free algebra by
the equivalence
\begin{equation}
p\equiv 0\,\Leftrightarrow \,(S,upv)= 0 \text{ for all } u,v\in
A^*.
\label{eqAlgSynt}
\end{equation}
The morphism $\psi_S$ recognizes $S$ since the map
$\pi$ from $\Agoth_S$ into $K$ defined by $\pi(\psi_S(w))=(S,w)$
is well-defined and linear.

The syntactic algebra of a
series $S$
satisfies the following universal property 
(see~\cite{BerstelReutenauer2011}, Exercise 2.1.4). 
\begin{proposition}\label{propUniversal}
If  $\psi:K\langle A\rangle\rightarrow \Agoth$ is a surjective
morphism recognizing $S$,
there exists
a morphism $\rho$ from $\Agoth$  onto the syntactic algebra of $S$
such that $\psi_S=\rho\circ \psi$. 
\end{proposition}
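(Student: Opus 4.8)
The plan is to exploit the surjectivity of $\psi$ and simply transport $\psi_S$ across it. Since every element of $\Agoth$ has the form $\psi(p)$ for some $p\in K\langle A\rangle$, the only possible $\rho$ is the one determined by the rule $\rho(\psi(p))=\psi_S(p)$, and then the identity $\psi_S=\rho\circ\psi$ holds by construction. So the entire content of the proposition is the \emph{well-definedness} of this rule, which is precisely the kernel inclusion $\ker\psi\subseteq\ker\psi_S$.

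First I would pin down $\ker\psi_S$. Unwinding the definition of the syntactic representation, the operator $\psi_S(p)$ sends each generator $S\cdot u$ of $V_S$ to $S\cdot up$ (the action being extended linearly from words to the polynomial $p$), so $\psi_S(p)=0$ means $S\cdot up=0$ for every $u$, that is, $(S,upv)=0$ for all $u,v\in A^*$. This is exactly the congruence~(\ref{eqAlgSynt}) defining $\Agoth_S$, so $\ker\psi_S=\{p\mid (S,upv)=0\text{ for all }u,v\}$.

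The key step is then the inclusion itself. Take $p$ with $\psi(p)=0$. Since $\pi\circ\psi$ and $(S,\cdot)$ are both linear and agree on every word, they agree on every polynomial; in particular $(S,q)=\pi(\psi(q))$ for all $q\in K\langle A\rangle$. Applying this to $q=upv$ and using that $\psi$ is an algebra morphism,
\begin{displaymath}
(S,upv)=\pi\big(\psi(u)\psi(p)\psi(v)\big)=\pi(0)=0
\end{displaymath}
for all words $u,v$, because $\psi(p)=0$. Hence $p\in\ker\psi_S$, establishing the inclusion and the well-definedness of $\rho$.

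It then remains only to record that $\rho$ has the stated properties, all of which are formal. Multiplicativity, additivity and $K$-linearity of $\rho$ follow from those of $\psi$ and $\psi_S$ together with the surjectivity of $\psi$ (for instance $\rho(\psi(p)\psi(q))=\rho(\psi(pq))=\psi_S(pq)=\psi_S(p)\psi_S(q)$); that $\rho$ maps onto $\Agoth_S$ is immediate, since $\Agoth_S=\psi_S(K\langle A\rangle)=\rho(\psi(K\langle A\rangle))=\rho(\Agoth)$; and $\psi_S=\rho\circ\psi$ holds by construction. I expect no genuine obstacle beyond the kernel inclusion, whose only subtlety is the passage from recognition on words to recognition on arbitrary polynomials — which is exactly where the algebra-morphism property of $\psi$ is used.
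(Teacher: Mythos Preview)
Your proof is correct and follows essentially the same approach as the paper's: both reduce the problem to the kernel inclusion $\ker\psi\subseteq\ker\psi_S$ and verify it via the computation $(S,upv)=\pi(\psi(u)\psi(p)\psi(v))=0$. Your version is simply more explicit about the surrounding formalities (the characterization of $\ker\psi_S$, the passage from words to polynomials, and the verification that $\rho$ is a surjective algebra morphism), which the paper leaves implicit.
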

\begin{proof}
Let $\pi:\Agoth\rightarrow K$ be the linear map such that
$\pi(\psi(w))=(S,w)$ for any $w\in A^*$. We have to prove that
if $p\in K\langle A\rangle$ is such that $\psi(p)=0$, then
$\psi_S(p)=0$. But if $\psi(p)=0$, then for any $u,v\in A^*$,
$(S,upv)=\pi(\psi(upv))=\pi(\psi(u)\psi(p)\psi(v)))=0$ and thus
$\psi_S(p)=0$.
\end{proof}
 Thus, in view of Proposition~\ref{propEquivDefRationalSeries},
 a series is rational if and only if its syntactic algebra
is finite dimensional. 

A linear representation $(\lambda,\mu,\gamma)$ 
of a series $S$ is said
to be \emph{minimal} if the dimension of the matrices
$\mu(w)$ is equal to the dimension of $V_S$.
In this case, for any word $w$, $\mu(w)$ is the matrix representing
$\psi_S(w)$ in some basis.

Note that in this case $K^n$ is generated by the vectors
$\lambda\mu(w)$ for $w\in A^*$.
Symmetrically the space $W$ of column $n$-vectors is generated by the
$\mu(w)\gamma$ for $w\in A^*$.

\begin{example}\label{exampleSyntactic}
Let $A=\{a\}$ and $S=\u(a^+)$. Then $\{S,S\cdot a\}$ is a basis of $V_S$ and $S$ is
recognized
by the linear representation $(\lambda,\mu,\gamma)$ with
$\lambda=[1\ 0]$,
$\gamma=[0\ 1]^t$ and
\begin{displaymath}
\mu(a)=\begin{bmatrix}0&1\\0&1\end{bmatrix}.
\end{displaymath}
This representation is minimal.
\end{example}

The following is Proposition 14.7.1 in~\cite{BerstelPerrinReutenauer2009}.
\begin{proposition}\label{st14.7.1}
  Let $X$ be a subset of $A^*$ and let $S = \u(X)$. Let $\varphi$
  be the canonical morphism from $A^*$ onto the syntactic monoid $M$ of
  $X$. Then for all $u,v \in A^*$,
  \begin{displaymath}
    \varphi(u) = \varphi(v) \Leftrightarrow \psi_{S}(u) =
    \psi_{S}(v)\,. 
  \end{displaymath}
  In particular the monoid $\psi_{S}(A^*)$ is isomorphic to $M$.
\end{proposition}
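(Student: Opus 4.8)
The plan is to prove the displayed equivalence by unfolding both sides into statements about the values $(S,w)$ and then recognizing the resulting condition as the defining relation of the syntactic congruence of $X$. First I would use that $V_S$ is spanned by the series $S\cdot w$ for $w\in A^*$, so that two endomorphisms of $V_S$ coincide as soon as they agree on this spanning family. Hence $\psi_S(u)=\psi_S(v)$ if and only if $(S\cdot w)\psi_S(u)=(S\cdot w)\psi_S(v)$ for every $w\in A^*$. Applying the defining rule $x\psi_S(u)=x\cdot u$ together with $(S\cdot w)\cdot u=S\cdot wu$, this says $S\cdot wu=S\cdot wv$ for all $w$. Reading the equality of series coefficient by coefficient, $(S\cdot wu,t)=(S\cdot wv,t)$ for every $t$, i.e. $(S,wut)=(S,wvt)$. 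Thus $\psi_S(u)=\psi_S(v)$ is equivalent to the condition that $(S,wut)=(S,wvt)$ for all $w,t\in A^*$.

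Next I would invoke $S=\u(X)$, for which $(S,x)=1$ precisely when $x\in X$. The condition just obtained then says that for all $w,t$ one has $wut\in X\iff wvt\in X$, which is exactly $C(u)=C(v)$, that is, $\varphi(u)=\varphi(v)$. This gives both directions of the equivalence at once. For the final assertion, the forward implication shows that $\varphi(u)\mapsto\psi_S(u)$ is well defined on $M$, the backward implication shows it is injective, and it is a surjective monoid morphism onto $\psi_S(A^*)$ since both $\varphi$ and the restriction of $\psi_S$ to $A^*$ are monoid morphisms through which it factors; hence $M\cong\psi_S(A^*)$.

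I expect no genuine obstacle here: the argument is a chase through the definitions, and the only point requiring care is the bookkeeping of the right action, namely applying $(S\cdot w)\cdot u=S\cdot wu$ in the correct order so that testing $\psi_S(u)$ against the generators $S\cdot w$ produces precisely the two-sided contexts appearing in $C(u)$.
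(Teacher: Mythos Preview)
Your proof is correct. The paper does not actually prove this proposition; it merely cites it as Proposition~14.7.1 of~\cite{BerstelPerrinReutenauer2009}, so there is no in-paper argument to compare against. Your unfolding is exactly the standard one: reduce $\psi_S(u)=\psi_S(v)$ to agreement on the spanning family $\{S\cdot w\}$, translate that into $(S,wut)=(S,wvt)$ for all $w,t$, and observe that for $S=\u(X)$ this is precisely the syntactic congruence $C(u)=C(v)$.
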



\section{Completely reducible sets}\label{sectionCompletelyReducible}
In this section we define the family of completely reducible sets.
In the first part, we prove some closure properties of this family
(Theorem~\ref{theoremVariety}). In the second part,
we prove some necessary and some sufficient conditions for membership
in the family and a characterization in the case of a one-letter alphabet.

\subsection{Completely reducible series}
A series is \emph{completely reducible} if its syntactic
  representation is completely reducible.
As we have seen in Section
\ref{sectionMonoidsMatrices}, this is equivalent to the semisimplicity
of its syntactic algebra. Moreover, by Proposition~\ref{propUniversal},
if a series $S$ is recognized by a morphism onto a semisimple
algebra, then $S$ is completely reducible.

The following result was suggested to me by Christophe Reutenauer
(personal communication).
\begin{proposition}\label{propositionChristophe1}
Any linear combination of completely reducible series is 
completely reducible.
\end{proposition}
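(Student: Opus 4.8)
The plan is to reduce everything to the criterion recorded just before the statement: a series recognized by a \emph{surjective} morphism onto a semisimple algebra is completely reducible. So, writing $S=\sum_{i=1}^{k}\alpha_i S_i$ with each $S_i$ completely reducible, I would aim to manufacture a single semisimple algebra $\Agoth$, a surjective morphism $\psi\colon K\langle A\rangle\rightarrow\Agoth$, and a linear form $\pi\colon\Agoth\rightarrow K$ that together recognize $S$. Once this is done, Proposition~\ref{propUniversal} yields a surjection from $\Agoth$ onto $\Agoth_S$, and since a quotient of a semisimple algebra is semisimple, $\Agoth_S$ is semisimple and $S$ is completely reducible.

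First I would take the syntactic representations $\psi_{S_i}\colon K\langle A\rangle\rightarrow\End(V_{S_i})$; by hypothesis each $V_{S_i}$ is a completely reducible module and each syntactic algebra $\Agoth_{S_i}$ is semisimple. I would then form the diagonal representation $\psi=\bigoplus_{i=1}^{k}\psi_{S_i}$ on $V=\bigoplus_{i=1}^{k}V_{S_i}$ and set $\Agoth=\psi(K\langle A\rangle)\subseteq\End(V)$, so that $\psi$ is surjective onto $\Agoth$ by construction. To recognize $S$, note that every $\psi(w)$ is block diagonal with blocks $\psi_{S_i}(w)$, so the coordinate projections $\mathrm{pr}_i\colon\Agoth\rightarrow\Agoth_{S_i}$ satisfy $\mathrm{pr}_i\circ\psi=\psi_{S_i}$. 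Taking the linear forms $\pi_i$ recognizing $S_i$ and setting $\pi=\sum_{i=1}^{k}\alpha_i\,(\pi_i\circ\mathrm{pr}_i)$ gives $\pi(\psi(w))=\sum_i\alpha_i(S_i,w)=(S,w)$ for all $w\in A^*$, as required.

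The substantive step is then to show that $\Agoth$ is semisimple. I would argue via a faithful completely reducible representation rather than by an embedding: as a $K\langle A\rangle$-module, $V$ is the direct sum of the completely reducible modules $V_{S_i}$, and a direct sum of completely reducible modules is again completely reducible (it is a sum of simple submodules). Since $\Agoth$ is the image of $K\langle A\rangle$, the $\Agoth$-invariant subspaces of $V$ are exactly its $K\langle A\rangle$-invariant subspaces, so $V$ is a completely reducible $\Agoth$-module. The inclusion $\Agoth\hookrightarrow\End(V)$ is faithful by definition of $\Agoth$, so $\Agoth$ possesses a faithful completely reducible representation and is therefore semisimple, by the correspondence recalled in the excerpt between semisimplicity of an algebra and complete reducibility of its representations.

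The main obstacle I anticipate is precisely this semisimplicity of $\Agoth$. The tempting shortcut is to observe that $\Agoth$ embeds into the semisimple product $\prod_i\Agoth_{S_i}$ and conclude immediately, but the excerpt explicitly warns that a subalgebra of a semisimple algebra need not be semisimple (every finite-dimensional algebra sits inside some matrix algebra). Hence the argument must route through the completely reducible \emph{module} $V$ and the faithful-representation criterion; the genuinely load-bearing fact is that a (finite) direct sum of completely reducible modules is completely reducible, together with the observation that passing to the image $\Agoth$ does not change the lattice of invariant subspaces. Everything else is bookkeeping.
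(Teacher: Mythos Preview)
Your proof is correct and follows essentially the same route as the paper. The paper isolates the key step as a separate lemma (Lemma~\ref{lemmaSubDirect}) stating that the diagonal representation on $V_1\times V_2$ is completely reducible, then argues for two summands and scalars; you handle all $k$ summands at once and are more explicit about why the image algebra $\Agoth$ is semisimple (faithful completely reducible module, not mere embedding into a product), but the underlying argument is the same.
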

We use the following property.

\begin{lemma}\label{lemmaSubDirect}
Let $\varphi_1,\varphi_2$ be two morphisms from $A^*$ into
$\End(V_1)$ and $\End(V_2)$ respectively. Set $V=V_1\times V_2$.
 If $\varphi_1,\varphi_2$
are completely reducible, the morphism $\varphi$ from $A^*$
into $\End(V)$ defined by 
$\varphi(w)(v_1,v_2)=(\varphi_1(w)(v_1),\varphi_2(w)(v_2))$
is completely reducible.
\end{lemma}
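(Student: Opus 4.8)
The plan is to avoid algebra arguments and work directly with the structural characterization of complete reducibility recalled in Section~\ref{sectionMonoidsMatrices}: a representation of $A^*$ on a finite-dimensional space is completely reducible exactly when that space is a direct sum of invariant subspaces on each of which $A^*$ acts irreducibly. One might instead be tempted to observe that the algebra generated by $\varphi(A^*)$ embeds in the product of the two semisimple algebras generated by $\varphi_1(A^*)$ and $\varphi_2(A^*)$ and then invoke the stability of semisimplicity under products; but, as noted in Section~\ref{sectionMonoidsMatrices}, a subalgebra of a semisimple algebra need not be semisimple, and the image of $A^*$ lands only in the diagonal subalgebra of that product, so this route does not work directly. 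The module-theoretic decomposition below sidesteps the difficulty.

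Since $\varphi$ acts componentwise on $V=V_1\times V_2$, I would first use the complete reducibility of $\varphi_1$ to split $V_1=\bigoplus_j U_j$ into $\varphi_1$-invariant subspaces on each of which $\varphi_1$ acts irreducibly, and likewise $V_2=\bigoplus_k W_k$ for $\varphi_2$. Transporting these into $V$ yields the internal direct sum
\[
V=\Big(\bigoplus_j U_j\times\{0\}\Big)\ \oplus\ \Big(\bigoplus_k \{0\}\times W_k\Big),
\]
and every summand is $\varphi$-invariant, since $\varphi(w)(u,0)=(\varphi_1(w)u,0)$ and $\varphi(w)(0,v)=(0,\varphi_2(w)v)$.

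The only mildly delicate point — and the step I expect to be the crux — is checking that $\varphi$ acts irreducibly on each summand. Because the action is componentwise, every $\varphi$-invariant subspace of $U_j\times\{0\}$ is of the form $U'\times\{0\}$ with $U'$ a $\varphi_1$-invariant subspace of $U_j$; as $\varphi_1$ is irreducible on $U_j$, this forces $U'=0$ or $U'=U_j$, so $U_j\times\{0\}$ is irreducible under $\varphi$, and symmetrically for each $\{0\}\times W_k$. Having written $V$ as a direct sum of $\varphi$-invariant subspaces on each of which $\varphi$ acts irreducibly, the converse direction of the characterization from Section~\ref{sectionMonoidsMatrices} gives that $\varphi$ is completely reducible, which is the assertion of the lemma.
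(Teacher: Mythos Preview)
Your proof is correct and follows essentially the same approach as the paper's own proof: decompose $V_1$ and $V_2$ into irreducible components and observe that these assemble into a decomposition of $V=V_1\times V_2$ into irreducible $\varphi$-invariant summands. The paper's proof is a one-line version of this; your additional verification that each $U_j\times\{0\}$ and $\{0\}\times W_k$ is irreducible under $\varphi$, and your remark on why passing to the product of algebras does not immediately yield the result, are correct elaborations but not a different argument.
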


\begin{proof}
Since $V_1$ and $V_2$ are direct sums of irreducible components $W_i$,
the same holds for $V$ and thus $\varphi$ is completely reducible.
\end{proof}
We now give the proof of Proposition~\ref{propositionChristophe1}.
\begin{proof}
If $S$ is completely reducible, then $\alpha S$ is clearly completely reducible
for any $\alpha\in K$. Next let $S_1,S_2$ be completely reducible
series. For
$i=1,2$, let $\Agoth_i=\Agoth_{S_i}$, $\psi_i=\psi_{S_i}$ and let
$\pi_i:\Agoth_i\rightarrow K$ be the linear map defined by
$\pi_i(\psi_i(w))=(S_i,w)$.
Consider the morphism $\psi:A^*\rightarrow \Agoth_1\times \Agoth_2$
defined by $\psi(w)=(\psi_1(w),\psi_2(w))$. It recognizes $S+T$ since
the  map $\pi:\Agoth_1\times \Agoth_2\rightarrow K$ defined by 
$\pi(x)=\pi_1(x)+\pi_2(x)$ is linear and such that 
$\pi(\psi(w))=(S_1,w)+(S_2,w)$.
By Lemma~\ref{lemmaSubDirect},
$\psi$ is completely reducible. Thus $\psi(K\langle A\rangle)$
is semisimple which implies that $S_1+S_2$ is completely reducible.
\end{proof}
\subsection{The family of completely reducible sets}
The syntactic representation (resp. algebra) of a set $X\subset A^*$ is the
syntactic representation (resp. algebra) of its characteristic
series. 

A rational set
is \emph{completely reducible} if its characteristic series
is completely reducible. 
\begin{example}\label{exampleCompletlyReducible1}
The sets $a^*$, $a^+$ and $1$ are completely reducible. Indeed,
the syntactic algebras of $a^*$ and $1$ have dimension $1$.
Concerning  $a^+$,
the linear representation of Example~\ref{exampleSyntactic}
takes in the basis $S-S\cdot a,S\cdot a$ the form $(\lambda',\mu',\gamma')$ with $\lambda'=[1\ 1]$,
$\gamma'=[-1\ 1]^t$ and
\begin{displaymath}
\mu'(a)=\begin{bmatrix}0&0\\0&1\end{bmatrix}.
\end{displaymath}
\end{example}
\begin{example}\label{example(ab)^*}
The sets $X=(ab)^*$ and  $Y=(ab)^*a$ are completely reducible. Indeed, 
$\u(X)$ is recognized by the linear representation
$(\lambda,\mu,\gamma)$
with $\lambda=[1\ 0]$ $\gamma=[1\ 0]^t$ and
\begin{displaymath}
\mu(a)=\begin{bmatrix}0&1\\0&0\end{bmatrix},\quad
\mu(b)=\begin{bmatrix}0&0\\1&0\end{bmatrix}
\end{displaymath}
Since there are no nontrivial invariant subspaces, the representation
$\mu$ is completely reducible.
The series $\u(Y)$
is recognized by $(\lambda,\mu,\gamma')$ with $\gamma'=[0\ 1]^t$.

\end{example}
\begin{example}\label{examplea}
The set $X=a$ is not completely reducible. Indeed,  $\u(X)$
is recognized by the linear
representation 
 $(\lambda,\mu,\gamma)$ with $\lambda=[1\ 0]$,
$\gamma=[0\ 1]^t$ and
\begin{displaymath}
\mu(a)=\begin{bmatrix}0&1\\0&0\end{bmatrix}.
\end{displaymath}
This representation is minimal.
The subspace generated by $[0\ 1]$ is the only nontrivial 
invariant subspace. Thus $X$ is not completely reducible.
\end{example}

\begin{theorem}\label{theoremVariety}
The family of completely reducible sets is  closed by residual,
complement
and reversal.
\end{theorem}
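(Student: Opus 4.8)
The plan is to translate each of the three operations into a statement about the syntactic series (and their syntactic algebras) and then to invoke the results already in hand. For the complement, note that the characteristic series of $A^*\setminus X$ is $\u(A^*)-\u(X)$. Since all words are syntactically equivalent for $A^*$, the syntactic algebra of $\u(A^*)$ is $K$, which is simple, so $\u(A^*)$ is completely reducible; as $\u(X)$ is completely reducible by hypothesis, Proposition~\ref{propositionChristophe1} applied to the linear combination $\u(A^*)-\u(X)$ shows that $A^*\setminus X$ is completely reducible.

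For the left residual $u^{-1}X$, I would use that its characteristic series is $S\cdot u$, where $S=\u(X)$, because $(S\cdot u,v)=(S,uv)$ equals $1$ exactly when $v\in u^{-1}X$. The syntactic space $V_{S\cdot u}$ is generated by the series $S\cdot uv$, hence is a subspace of $V_S$, and it is invariant since $(S\cdot uv)\cdot a=S\cdot uva$. Thus $\psi_{S\cdot u}$ is simply the restriction of $\psi_S$ to this invariant subspace. When $\psi_S$ is completely reducible, every invariant subspace of $V_S$ is, up to isomorphism, a direct sum of irreducible components (as recalled in Section~\ref{sectionMonoidsMatrices}), so $V_{S\cdot u}$ is completely reducible and therefore $u^{-1}X$ is completely reducible. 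The right residual $Xu^{-1}$ then follows from the identity $\widetilde{Xu^{-1}}=\tilde u^{-1}\tilde X$ together with closure under reversal, so no separate argument is needed.

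For the reversal, I would start from a minimal representation $(\lambda,\mu,\gamma)$ of $\u(X)$ and introduce the morphism $\nu$ defined on letters by $\nu(a)=\mu(a)^t$. A short computation gives $\nu(w)=\mu(\tilde w)^t$, and hence that $(\gamma^t,\nu,\lambda^t)$ recognizes $\u(\tilde X)$, using $(\u(\tilde X),w)=(\u(X),\tilde w)$ and the fact that a scalar equals its transpose. The image $\nu(K\langle A\rangle)$ is the set of transposes of the elements of the syntactic algebra $\Agoth_X$, so it is anti-isomorphic to $\Agoth_X$, i.e. isomorphic to $\Agoth_X^{\mathrm{op}}$. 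Since the two-sided ideals of an algebra and of its opposite coincide, a simple algebra stays simple under passage to the opposite, and a finite product of simple algebras does too; thus $\Agoth_X^{\mathrm{op}}$ is semisimple whenever $\Agoth_X$ is. Therefore $\u(\tilde X)$ is recognized by a morphism onto a semisimple algebra and is completely reducible by the remark following Proposition~\ref{propUniversal}.

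The step I expect to be most delicate is the reversal: one must pin down precisely which algebra recognizes $\u(\tilde X)$ and verify that semisimplicity survives passage to the opposite (equivalently, transpose) algebra. This fact is not among the quoted properties of semisimple algebras, but it follows directly from the definition of semisimplicity as a finite product of simple algebras together with the left–right symmetry of two-sided ideals, so I would state it explicitly rather than cite it.
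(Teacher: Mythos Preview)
Your proof is correct and follows essentially the same approach as the paper: complement via Proposition~\ref{propositionChristophe1} applied to $\u(A^*)-\u(X)$; left residual via the observation that $V_{S\cdot u}$ is an invariant subspace of $V_S$ and hence inherits complete reducibility; right residual reduced to left residual through reversal. For reversal, the paper also passes to the transposed representation $(\gamma^t,\nu,\lambda^t)$ with $\nu(w)=\mu(\tilde w)^t$, but concludes by noting that minimality is preserved (so $\nu$ is again the syntactic representation), leaving implicit that complete reducibility survives transposition; your route through the anti-isomorphism $\nu(K\langle A\rangle)\cong\Agoth_X^{\mathrm{op}}$ and semisimplicity of the opposite algebra is a slightly different packaging of the same fact, and has the small advantage of not requiring the minimality check.
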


\begin{proposition}\label{propReversal}
The family of completely reducible sets is closed by reversal.
\end{proposition}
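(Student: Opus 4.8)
The plan is to reduce the statement to a property of syntactic algebras and then exploit the fact that reversal is an \emph{anti}-automorphism of the free algebra. Recall that a set is completely reducible exactly when its syntactic algebra is semisimple; so, writing $S=\u(X)$ and denoting by $\tilde S=\u(\tilde X)$ the reversed series, it suffices to compare $\Agoth_{\tilde S}$ with $\Agoth_S$. The key claim I would aim for is that these two algebras are anti-isomorphic, i.e. $\Agoth_{\tilde S}\cong\Agoth_S^{\mathrm{op}}$.

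To establish the claim, first note that $(\tilde S,w)=(S,\tilde w)$ for every word $w$, and that the reversal map $w\mapsto\tilde w$ extends by linearity to a $K$-linear map $\tau$ on $K\langle A\rangle$ which is an anti-automorphism, $\tau(pq)=\tau(q)\tau(p)$, with $\tau^2=\mathrm{id}$. The elementary identity $\widetilde{upv}=\tilde v\,\tilde p\,\tilde u$ then gives, for a polynomial $p$ and words $u,v$, that $(\tilde S,upv)=(S,\tilde v\,\tau(p)\,\tilde u)$. Since $\tilde u,\tilde v$ range over all of $A^*$ as $u,v$ do, the defining congruence~(\ref{eqAlgSynt}) for $\tilde S$ translates into: $p\equiv 0$ for $\tilde S$ if and only if $\tau(p)\equiv 0$ for $S$. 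Hence $\tau$ maps the ideal defining $\Agoth_S$ onto the one defining $\Agoth_{\tilde S}$ and descends to an anti-isomorphism of the quotients $\Agoth_S\to\Agoth_{\tilde S}$, proving the claim.

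It then remains to check that semisimplicity passes to the opposite algebra, and this is the one step where I would be most careful, as it reaches slightly beyond the facts recalled in the text. It is nonetheless immediate from the definitions used here: an anti-isomorphism carries two-sided ideals to two-sided ideals, so it preserves simplicity; and the opposite of a finite direct product is the product of the opposites, $(\prod_i\Agoth_i)^{\mathrm{op}}=\prod_i\Agoth_i^{\mathrm{op}}$. Thus the opposite of a finite product of simple algebras is again such a product, i.e. the opposite of a semisimple algebra is semisimple. Combining everything: if $X$ is completely reducible then $\Agoth_S$ is semisimple, hence $\Agoth_S^{\mathrm{op}}\cong\Agoth_{\tilde S}$ is semisimple, hence $\tilde X$ is completely reducible.

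As a variant avoiding the syntactic-algebra formalism, one could instead take a minimal representation $(\lambda,\mu,\gamma)$ of $S$ and verify directly that $(\gamma^t,\nu,\lambda^t)$, with $\nu(a)=\mu(a)^t$, recognizes $\tilde S$; the image of $\nu$ is the transpose of $\Agoth_S$, again anti-isomorphic to $\Agoth_S$, and one concludes via Proposition~\ref{propUniversal}. Either way the crux is the stability of semisimplicity under the opposite (equivalently, transpose) operation, while the remaining work is the routine verification that reversal intertwines the two syntactic structures.
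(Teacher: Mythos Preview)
Your proof is correct. The paper's own argument is the matrix variant you sketch at the end: it takes a linear representation $(\lambda,\mu,\gamma)$ of $\u(X)$, checks that $(\gamma^t,\nu,\lambda^t)$ with $\nu(w)=\mu(\tilde w)^t$ recognizes $\u(\tilde X)$, and observes that minimality is preserved. The paper leaves implicit precisely the point you took care to justify, namely that passing to transposes (equivalently, to the opposite algebra) preserves complete reducibility/semisimplicity.

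Your main argument via the anti-isomorphism $\Agoth_S\cong\Agoth_{\tilde S}^{\mathrm{op}}$ derived directly from the congruence~(\ref{eqAlgSynt}) is a genuinely cleaner route: it bypasses any discussion of minimality of the transposed representation and reduces everything to the elementary fact that two-sided ideals are preserved by an anti-isomorphism. The paper's approach has the virtue of being concrete and immediately giving an explicit representation of the reversal; yours has the virtue of isolating exactly which structural property of semisimplicity is being used. Either way the content is the same, and your variant paragraph is essentially the paper's proof verbatim.
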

\begin{proof}
Let $X$ be a completely reducible set. Let $(\lambda,\mu,\gamma)$ be
a linear representation of $\u(X)$. Let $\nu$ be the
morphism from $A^*$ into $K^{n\times n}$ defined by
$\nu(w)=\mu(\tilde{w})^t$.
Then $(\gamma^t,\nu,\lambda^t)$ is a linear representation
of $\u(\tilde{X})$. Indeed
\begin{displaymath}
(\u(\tilde{X}),w)=(\u(X),\tilde{w})=\lambda\mu(\tilde{w})\gamma
=(\lambda\mu(\tilde{w})\gamma)^t=\gamma^t\nu(w)\lambda^t.
\end{displaymath}
Moreover  $(\lambda,\mu,\gamma)$ is minimal if and only if
$(\gamma^t,\nu,\lambda^t)$ is minimal.
\end{proof}
We now give the proof of Theorem~\ref{theoremVariety}.
\begin{proof}
Let $\V$ be the family of completely reducible sets.

For $X\in \V$ and $w\in A^*$, let $Y=w^{-1}X$. Set $S=\u(X)$ and
$T=\u(Y)$. Then $T=S\cdot w$.
The syntactic space $V_T$ is the space generated by the $T\cdot
u=S\cdot wu$. Since $V_T$ is an invariant subspace of $V_S$, the
invariant
subspaces of $V_T$ are invariant subspaces of $V_S$. Thus the
syntactic
representation of $T$ is also completely reducible. Therefore $Y\in \V$.
The proof that $Xw^{-1}\in A^*\V$ is similar, using
Proposition~\ref{propReversal}. 

Let $X\in \V$ and set $Y=A^*\setminus X$. Since $A^*$ is completely reducible, 
by Proposition~\ref{propositionChristophe1}, the  series
$\u(Y)=\u(A^*)-\u(X)$ is completely reducible.
Thus $Y\in\V$.
\end{proof}

The family of completely reducible sets  is not closed by
intersection,
as shown by  Example~\ref{exampleInter}. Since it is closed by
complement is not closed by union either.
\begin{example}\label{exampleInter}
Let $X=(ab)^*a$ and $Y=(ac)^*a$. The sets $X$ and $Y$ are completely
reducible by Example~\ref{example(ab)^*}. We have $X\cap Y=a$
which is not completely
reducible by Example~\ref{examplea}. 
\end{example}
The following result shows an additional closure property.
\begin{proposition}\label{propEmptyWord}
For any rational set $X$, the sets $X$ and $X\cap A^+$
are simultaneously completely reducible.
\end{proposition}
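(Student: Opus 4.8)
The plan is to relate the syntactic data of $X$ and of $X\cap A^+$ directly. Write $S=\u(X)$ and $T=\u(X\cap A^+)$. The two characteristic series differ only in their constant term: if $1\in X$ then $T=S-\u(1)$, and if $1\notin X$ then $T=S$, so in the latter case there is nothing to prove. Thus I would concentrate on the case $1\in X$, where $T=S-\u(1)$ and symmetrically $S=T+\u(1)$. Since $\u(1)$ is completely reducible (its syntactic algebra has dimension $1$, as noted in Example~\ref{exampleCompletlyReducible1}), Proposition~\ref{propositionChristophe1} immediately gives that $T$ is a linear combination of completely reducible series whenever $S$ is, and vice versa. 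This handles both directions at once.

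More explicitly, first I would suppose $X$ is completely reducible, so $S=\u(X)$ is completely reducible. If $1\notin X$ then $X\cap A^+=X$ and the conclusion is trivial. If $1\in X$ then $\u(X\cap A^+)=\u(X)-\u(1)$, which is a linear combination of the completely reducible series $\u(X)$ and $\u(1)$; by Proposition~\ref{propositionChristophe1} it is completely reducible, so $X\cap A^+$ is completely reducible. Conversely, if $X\cap A^+$ is completely reducible, then $\u(X)=\u(X\cap A^+)+\varepsilon\,\u(1)$ where $\varepsilon\in\{0,1\}$ records whether $1\in X$; again this is a linear combination of completely reducible series, so $\u(X)$ is completely reducible and hence $X$ is.

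The main point to verify carefully is simply that $\u(1)$ is completely reducible and that the relation between $\u(X)$ and $\u(X\cap A^+)$ really is an honest $K$-linear combination of characteristic series, so that Proposition~\ref{propositionChristophe1} applies. Both are routine: $\u(1)$ has a one-dimensional syntactic algebra, and the identity $(\u(X),w)=(\u(X\cap A^+),w)+(\u(1),w)$ when $1\in X$ holds coefficientwise since the two series agree on every nonempty word and differ by exactly $1$ on the empty word. I do not expect any genuine obstacle here; the whole statement is a short corollary of Proposition~\ref{propositionChristophe1}, and the only thing to be slightly careful about is splitting on whether $1\in X$ so that the linear combination is stated correctly in each case.
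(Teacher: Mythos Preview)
Your proof is correct and follows exactly the same approach as the paper: reduce to the case $1\in X$, write $\u(X\cap A^+)=\u(X)-\u(1)$, and apply Proposition~\ref{propositionChristophe1} together with the fact that $\u(1)$ has a one-dimensional syntactic algebra. The paper's proof is essentially a one-line version of what you wrote.
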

\begin{proof} We may assume that $1\in X$. Set $Y=X\cap A^+$.
Since $\u(Y)=\u(X)-1$, this results directly from Proposition~\ref{propositionChristophe1}.
\end{proof}

\subsection{Some properties of completely reducible sets}

 It follows from Proposition~\ref{st14.7.1}
that the syntactic algebra of a set $X$ is a quotient of the algebra $K[M]$
where $M$ is the syntactic monoid of $X$. As a consequence, we have
the
following statement, which gives a sufficient condition for complete reducibility.
\begin{proposition}
If the algebra of the syntactic monoid of a set $X\subset A^*$
is semisimple, then
 $X$ is completely reducible.
\end{proposition}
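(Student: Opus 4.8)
The plan is to exploit the universal property of the syntactic algebra established in Proposition~\ref{propUniversal}, together with the general fact recalled earlier that every representation of a semisimple algebra is completely reducible. The key structural observation is that, by Proposition~\ref{st14.7.1}, the monoid $\psi_S(A^*)$ (where $S=\u(X)$) is isomorphic to the syntactic monoid $M$ of $X$. Consequently the syntactic morphism $\psi_S:K\langle A\rangle\to\Agoth_S$ factors through the monoid algebra $K[M]$: the canonical morphism $\varphi:A^*\to M$ extends linearly to a surjective algebra morphism $K\langle A\rangle\to K[M]$, and because $\psi_S$ identifies exactly the words identified by $\varphi$, there is an induced surjection from $K[M]$ onto $\Agoth_S$.

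First I would make this factorization precise: the linear extension of $\varphi$ to $K[M]$ recognizes $S$ (via the linear map sending $\varphi(w)$ to $(S,w)$, which is well defined precisely because $\varphi(u)=\varphi(v)$ forces $\psi_S(u)=\psi_S(v)$ and hence equal contexts). Applying Proposition~\ref{propUniversal} to this surjective morphism $K[M]\to K$ recognizing $S$ then yields a surjective algebra morphism $\rho:K[M]\to\Agoth_S$ with $\psi_S=\rho\circ(\text{extension of }\varphi)$. Thus $\Agoth_S$ is a quotient of $K[M]$, which is the statement already asserted in the paragraph preceding the proposition.

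Next I would invoke the hypothesis: $K[M]$ is semisimple. By the first recalled property of semisimple algebras, a quotient of a semisimple algebra is semisimple, so $\Agoth_S$ is semisimple. The final step is to translate semisimplicity of the syntactic algebra back into complete reducibility of the syntactic representation. This is immediate from the correspondence recalled in the section on linear representations: every representation of a semisimple algebra is completely reducible, and the syntactic representation $\psi_S$ is a representation of $\Agoth_S$ on $V_S$. Hence $\psi_S$ is completely reducible, which is by definition the statement that $X$ is completely reducible.

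I do not expect any genuine obstacle here, since the proposition is essentially a bookkeeping consequence of results already in place; the only point requiring a little care is the direction of the quotient map, namely confirming that $\Agoth_S$ is a quotient of $K[M]$ rather than the reverse, so that semisimplicity is inherited in the correct direction (quotients of semisimple algebras are semisimple, whereas subalgebras need not be, as the text explicitly warns). Making the factorization of $\psi_S$ through $K[M]$ explicit, as above, settles this and completes the argument.
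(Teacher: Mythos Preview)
Your approach is exactly the paper's: the paragraph preceding the proposition already states that, as a consequence of Proposition~\ref{st14.7.1}, the syntactic algebra $\Agoth_S$ is a quotient of $K[M]$, and the proposition is recorded without further proof as an immediate consequence (quotient of semisimple is semisimple, hence the syntactic representation is completely reducible). Your write-up just unpacks this in more detail; the only slip is notational---when you invoke Proposition~\ref{propUniversal} you should apply it to the surjective algebra morphism $K\langle A\rangle\to K[M]$ (the linear extension of $\varphi$), not to a map ``$K[M]\to K$'', but this is clearly what you intend.
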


The following result gives in turn a necessary condition for
complete reducibility.
\begin{proposition}\label{propositionNecessary}
If $X\subset A^*$ is completely reducible, the syntactic
monoid of $X$ has a faithful completely reducible representation.
\end{proposition}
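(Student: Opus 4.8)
The plan is to take the obvious candidate representation of the syntactic monoid $M$ of $X$ — the one coming from the syntactic representation $\psi_S$ of $S=\u(X)$ — and to check that the hypothesis makes it both faithful and completely reducible. First I would set $S=\u(X)$ and let $V_S$, $\psi_S$ and $\Agoth_S$ denote the syntactic space, the syntactic representation and the syntactic algebra of $S$. By hypothesis $X$ is completely reducible, which by definition means that $\psi_S$ is a completely reducible representation of $K\langle A\rangle$; equivalently $\Agoth_S=\psi_S(K\langle A\rangle)$ is semisimple and $V_S$ is a completely reducible $\Agoth_S$-module.

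Next I would produce the representation of $M$. By Proposition~\ref{st14.7.1}, the monoid $\psi_S(A^*)$ is isomorphic to $M$, the isomorphism sending the image in $M$ of a word $u$ to $\psi_S(u)$. This yields a well-defined, injective (hence faithful) linear representation $\rho:M\to\End(V_S)$ with $\rho(M)=\psi_S(A^*)$. It then remains to see that $\rho$ is completely reducible, that is, that the submonoid $\psi_S(A^*)$ of $\End(V_S)$ is completely reducible in the sense of Section~\ref{sectionMonoidsMatrices}.

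The key observation — and the only point requiring care — is that the subspaces of $V_S$ invariant under the monoid $\psi_S(A^*)$ are exactly the subspaces invariant under the algebra $\Agoth_S$. Indeed, since $\psi_S$ is linear and every element of $K\langle A\rangle$ is a finite $K$-linear combination of words, $\Agoth_S$ is the linear span of $\psi_S(A^*)$ in $\End(V_S)$; and invariance of a subspace is a linear condition, so $V'm\subset V'$ for every $m\in\psi_S(A^*)$ forces $V'x\subset V'$ for every linear combination $x$ of such $m$, and conversely. Hence a subspace of $V_S$ has a $\psi_S(A^*)$-invariant complement if and only if it has an $\Agoth_S$-invariant complement. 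Since $\Agoth_S$ is semisimple, $V_S$ is completely reducible as an $\Agoth_S$-module, so every invariant subspace does admit an invariant complement. Therefore $\psi_S(A^*)$, and with it $\rho(M)$, is completely reducible, giving the desired faithful completely reducible representation of $M$.

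The hard part is conceptually just this identification of monoid-invariant and algebra-invariant subspaces; once it is in place, everything else is the definitions together with Proposition~\ref{st14.7.1}. The only thing to be careful about is the direction of the implication: we must deduce complete reducibility of the \emph{monoid} $\psi_S(A^*)$ (the object named in the statement) from semisimplicity of the \emph{algebra} $\Agoth_S$ (what the hypothesis supplies), and the span argument above is precisely what bridges the two.
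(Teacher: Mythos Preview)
Your proof is correct and is exactly the argument the paper has in mind: the paper's own proof is the single line ``This follows directly from Proposition~\ref{st14.7.1},'' and what you have written is a careful unpacking of that sentence, identifying $M$ with $\psi_S(A^*)$ via Proposition~\ref{st14.7.1} and noting that monoid-invariant and algebra-invariant subspaces of $V_S$ coincide.
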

\begin{proof}
This follows directly from Proposition~\ref{st14.7.1}.
\end{proof}

The semigroups having a faithful completely reducible representation
over $\C$ have been characterized by Rhodes~\cite{Rhodes1969}. The
characterization in arbitrary characteristics 
from~\cite{AlmeidaMargolisSteinbergVolkov2009}
is the following. If $\mathbf{V}$ is a class of semigroups, a
monoid morphism
$\varphi:M\rightarrow N$ is called a $\mathbf{V}$-morphism if
$\varphi^{-1}(e)\in\mathbf{V}$ for any idempotent $e\in N$.
A congruence  on $M$ is a $\mathbf{V}$-congruence
if the corresponding quotient morphism is a $\mathbf{V}$-morphism.
Denote by $\mathbf{G}_K$ the class of groups which is reduced to
the trivial group if the characteristic of $K$ is $0$
and to the class of finite $p$-groups if the characteristic
of $K$ is $p\ne 0$. Denote by $\mathbf{LG}_K$ the class of finite semigroups
$S$ such that $eSe\in \mathbf{G}_K$ for any idempotent $e\in S$.

The main result of~\cite{AlmeidaMargolisSteinbergVolkov2009} says
that  the intersection of the congruences
associated
to the irreducible representations of a finite monoid $M$
is the largest $\mathbf{LG}_K$-congruence.
This congruence is called the \emph{Rhodes radical} of the monoid $M$.
In particular,
a monoid $M$ has a faithful completely
reducible representation if and only if its Rhodes radical
is trivial.

A set $X\subset a^+$ is \emph{periodic}  if there is a 
nonzero
 integer
$n$ such that for each
$i\ge 1$, $a^i\in X$ if and only if $a^{n+i}\in X$. The least such
integer $n$ is called the period
of $X$.

The following result is a characterization of completely reducible
sets on a one letter alphabet. In the proof, we use the following result.
Let $V$ be a finite dimensional vector space over $K$.
An element $x$ in $\End(V)$ generates a semisimple algebra
if and only if the minimal
polynomial of $x$ has no factors of multiplicity $>1$ over $K$
(see~\cite{Lang2002} Chapter XVII, Exercise 10).
\begin{theorem}\label{theorem1letter}
A rational set $X\subset a^*$ is completely reducible if and only if 
$X\cap A^+$ is periodic and the period of $X$ does not divide the
characteristic of $K$.
\end{theorem}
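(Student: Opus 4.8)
The plan is to reduce complete reducibility to a statement about a single linear operator and then to a purely arithmetic question about the characteristic sequence of $X$. (I read the divisibility condition in the conclusion as the requirement that $\mathrm{char}\,K$ not divide the period, equivalently that $x^{p}-1$ be separable; this is the form that makes the one‑letter examples, such as $(a^2)^*$ being completely reducible exactly when $\mathrm{char}\,K\neq2$, come out correctly.) Since the alphabet has a single letter, the syntactic representation of $S=\u(X)$ is carried by the one operator $t=\psi_S(a)$ on $V_S$, and the syntactic algebra $\Agoth_S$ is the commutative algebra $K[t]$ it generates. By the criterion recalled just before the statement, $K[t]$ is semisimple, i.e.\ $X$ is completely reducible, if and only if the minimal polynomial of $t$ is squarefree over $K$. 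I would first identify this minimal polynomial with the minimal linear recurrence of the sequence $s_n=(S,a^n)$: as $V_S$ is spanned by the shifts $S\cdot a^i$, a polynomial $q=\sum_j c_j x^j$ satisfies $q(t)=0$ iff $\sum_j c_j s_{n+j}=0$ for all $n\ge 0$. Hence the minimal polynomial $f$ of $t$ is exactly the monic generator of the ideal of linear recurrences valid for $(s_n)_{n\ge0}$ from index $0$.

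Since $X$ is rational, $(s_n)$ is eventually periodic; let $m$ be its preperiod and $p$ its least period, so $s_{n+p}=s_n$ for $n\ge m$. Then $x^{m}(x^{p}-1)$ is a recurrence, so $f\mid x^{m}(x^{p}-1)$, and because $x$ and $x^{p}-1$ are coprime I can factor $f=x^{a}g(x)$ with $g(0)\neq0$, $0\le a\le m$ and $g\mid x^{p}-1$. Squarefreeness of $f$ then splits into the two independent conditions $a\le1$ and $g$ squarefree. The first condition is precisely the periodicity of $X\cap A^{+}$: the polynomial $x(x^{p}-1)$ is a recurrence valid from index $0$ iff $s_{n+p}=s_n$ for all $n\ge1$, that is iff $X\cap A^{+}$ is periodic of period $p$; and $x(x^{p}-1)\in(f)$ iff $f\mid x(x^{p}-1)$ iff $a\le1$, since $g\mid x^{p}-1$ already holds. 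Thus $a\le1$ is equivalent to $X\cap A^{+}$ being periodic.

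It remains to show that, granting $a\le1$, the factor $g$ is squarefree iff $\mathrm{char}\,K\nmid p$. If $\mathrm{char}\,K\nmid p$, then $x^{p}-1$ is squarefree (its derivative $p\,x^{p-1}$ is nonzero and coprime to $x^{p}-1$), hence so is every divisor, in particular $g$. Conversely, suppose $\mathrm{char}\,K=\ell$ divides $p$ and write $p=p'\ell^{e}$ with $\ell\nmid p'$, so $x^{p}-1=(x^{p'}-1)^{\ell^{e}}$ and the radical of $x^{p}-1$ equals $x^{p'}-1$. If $g$ were squarefree, then being a divisor of $x^{p}-1$ it would divide $x^{p'}-1$; combined with $a\le1$ this gives $f=x^{a}g\mid x(x^{p'}-1)$, forcing $s_{n+p'}=s_n$ for all $n\ge1$ and contradicting the minimality of the period $p$. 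Hence $g$ is not squarefree. Combining the two equivalences, $f$ is squarefree iff $X\cap A^{+}$ is periodic and $\mathrm{char}\,K\nmid p$, which is the claim.

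The step I expect to be the real obstacle is the precise determination of $f$ in the two previous paragraphs. One cannot simply read $m$ and $p$ off the number of states of the minimal automaton, because the dimension of $V_S$ — the rank over $K$ of the Hankel matrix of $(s_n)$ — may drop, and may itself depend on $\mathrm{char}\,K$, through linear dependencies among the shifts $S\cdot a^{i}$. This is exactly why I would route the argument through the ideal of linear recurrences of $(s_n)$ rather than through the automaton, and why some care is needed to verify that the periodic factor $g$ genuinely \emph{sees} the full period $p$ (the minimality argument above), so that the characteristic enters only through the clean dichotomy for $x^{p}-1$. Once this is pinned down, the two halves of the equivalence fall out as stated.
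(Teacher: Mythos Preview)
Your proof is correct and follows essentially the same route as the paper's: both hinge on the fact that $\Agoth_S=K[\psi_S(a)]$ is semisimple iff the minimal polynomial of $\psi_S(a)$ is squarefree, and both use that this polynomial divides $t^m(t^p-1)$. The differences are only in packaging---you work with the recurrence ideal of the $0/1$ sequence and treat both implications uniformly through the squarefree criterion, whereas the paper reads $m$ and $p$ off the index and period of the syntactic monoid, shortcuts the sufficiency direction via Maschke for $K[\Z/n\Z]$ together with Proposition~\ref{propEmptyWord}, and is rather elliptic in the final characteristic step (your minimal-period argument there is in fact the cleaner of the two).
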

\begin{proof}
Assume first that $X\cap A^+$ is periodic of period $n$
with $n$ not dividing the characteristic of $K$.
The syntactic semigroup $M$ of $X\cap A^+$ is isomorphic to $\Z/n\Z$. 
The algebra
of $M$ is
semisimple
since the algebra of the group $\Z/n\Z$ is semisimple. 
Thus $X\cap A^+$ is completely reducible. By
Proposition~\ref{propEmptyWord}
it implies that $X$ is completely reducible.

Conversely, let $X$ be a nonempty completely reducible subset of $a^+$.
Set $S=\u(X)$, $V=V_S$, $\psi=\psi_S$ and $\Agoth=\Agoth_S$.
Let $\varphi$ be the canonical morphism from $a^*$ onto the
 the syntactic monoid $M$  of $X$ and let $m=\varphi(a)$. Let $i\ge 0$
and $p\ge 1$ be the index of $M$ in such a way that
\begin{displaymath}
M=\{1,m,m^2,\ldots,m^{i-1},m^i,\ldots,m^{i+n-1}\}
\end{displaymath}
with $m^{i+n}=m^i$. 

 Set $x=\psi(a)$. Since $\Agoth$ is a quotient of $K[M]$,
the minimal polynomial $f(t)$ of $x$ divides $t^i(1-t^n)$. Since
$\Agoth$ is semisimple, the factor $t$ has multiplicity at most $1$
and thus $f(t)$ divides $t(1-t^n)$. This shows that $x=x^{n+1}$.

 This implies that, for any $i\ge 1$, $x^i=x^{i+n}$
and thus that 
\begin{displaymath}
a^i\in X\Leftrightarrow (S,a^i)=1\Leftrightarrow (S\cdot a^i,1)=1\Leftrightarrow
(S\cdot a^{i+n},1)=1\Leftrightarrow a^{i+n}\in X.
\end{displaymath}
Thus $X\cap A^+$ is periodic of period $n$. Finally, if the
characteristic
of $K$ is $p\ne 0$, then $n$ cannot be a multiple of $p$ since
otherwise $1-t^n=(1-t^\frac{n}{p})^p$ and $f(t)$ would have factors of
multiplicity $>1$.
\end{proof}

 Theorem~\ref{theorem1letter} implies that
the completely reducible subsets  of $a^*$ are of the form
$X$ or $X\cup 1$ for $X\subset a^+$ periodic.

Note that, in the proof of Theorem~\ref{theorem1letter}, we could
have used the Rhodes radical mentioned above to prove the necessity
of the condition. Indeed, the Rhodes radical of a finite cyclic monoid
generated by $m$ is trivial if
and only if $m=m^n$ for some $n>1$ which is not a multiple of the 
characteristic of $K$.

We end the section with a necessary condition for complete
reducibility. We say that a set $X$ is \emph{repeating}
if for any $x\in X$ there exist words $u,v$ such that
$xuxv\in X$.
\begin{proposition}
Any completely reducible rational set is repeating.
\end{proposition}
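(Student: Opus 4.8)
The plan is to pass to the syntactic algebra and read ``repeating'' as an algebraic condition incompatible with semisimplicity. Let $X$ be completely reducible, set $S=\u(X)$, and let $\psi=\psi_S$ and $\Agoth=\Agoth_S$ be its syntactic representation and syntactic algebra; by hypothesis $\Agoth$ is semisimple. Recall the linear map $\pi:\Agoth\to K$ with $\pi(\psi(w))=(S,w)$ for all $w\in A^*$, and that $\Agoth$ is spanned by $\psi(A^*)$. Fix $x\in X$ and put $a=\psi(x)$; then $\pi(a)=(S,x)=1$, so in particular $a\neq0$. If $X$ were not repeating at $x$, then $(S,xuxv)=0$ for all $u,v\in A^*$, that is $\pi\bigl(a\,\psi(u)\,a\,\psi(v)\bigr)=0$ for all $u,v$. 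Since $\psi(A^*)$ spans $\Agoth$ and this expression is bilinear in the two middle factors, it is equivalent to
\begin{displaymath}
\pi(a\,r\,a\,s)=0\qquad\text{for all }r,s\in\Agoth .
\end{displaymath}

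The second step exploits that a finite dimensional semisimple algebra is von Neumann regular: for every $a\in\Agoth$ there exists $b\in\Agoth$ with $aba=a$. Concretely, in a semisimple algebra the right ideal $a\Agoth$ is a direct summand of $\Agoth$ and hence generated by an idempotent $e=ab$ satisfying $ea=a$, whence $a=ea=aba$; alternatively one uses the Wedderburn decomposition $\Agoth\cong\prod_i M_{n_i}(D_i)$ and builds a generalized inverse componentwise. Choosing such a $b$ and taking $r=b$ in the displayed identity gives $\pi(a\,s)=\pi(aba\,s)=0$ for every $s\in\Agoth$. Specializing to $s=1$, the identity of $\Agoth$ (which exists by Wedderburn's theorem), yields $\pi(a)=0$, contradicting $\pi(a)=1$. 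Hence $X$ is repeating at $x$, and since $x\in X$ was arbitrary, $X$ is repeating.

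The only genuinely nonroutine ingredient is the regularity element $b$ with $aba=a$; everything else is bookkeeping. I expect this to be the point to justify with care, since it is exactly where semisimplicity (rather than mere finite dimensionality) is used: over a non-semisimple algebra $a$ may be nilpotent with $a\Agoth a=0$, and indeed the nonrepeating set $\{a\}$ fails to be completely reducible (Example~\ref{examplea}), where $\psi(a)$ is nilpotent and admits no such $b$. The linearization passing from the words $u,v$ to arbitrary algebra elements $r,s$ should be recorded explicitly, but it is immediate from the fact that $\psi(A^*)$ spans $\Agoth$.
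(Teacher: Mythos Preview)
Your proof is correct, but it takes a different route from the paper's. The paper works in the syntactic space $V_S$: given a nonrepeating $x$, it forms the invariant subspace $V'$ spanned by the $S\cdot xu$, observes that every element of $V'$ is annihilated by $\cdot\,x$, checks $0\ne V'\ne V$, and then shows directly that $V'$ cannot have an invariant complement (if $S=S'+S''$ with $S'\in V'$, $S''\in V''$, then $S\cdot x=S''\cdot x\in V'\cap V''=0$, contradicting $S\cdot x\ne0$). Your argument lives instead in the syntactic algebra: you translate nonrepeating into $\pi(a\Agoth a\Agoth)=0$ and kill this with von Neumann regularity of semisimple algebras. The paper's argument is more elementary---it uses only the definition of complete reducibility and no structure theory---while yours is shorter and cleanly isolates the algebraic obstruction ($a\Agoth a\Agoth$ would have to ``see'' $a$ itself in a regular ring). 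Your justification of regularity via the direct-summand/idempotent description of $a\Agoth$ is fine and matches the level of the paper; the Wedderburn route also works. One small stylistic point: the final paragraph of your write-up is commentary rather than proof and could be trimmed.
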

\begin{proof}
Arguing by contradiction, assume that $X$ is not repeating. Let $x\in
X$  be
such that $xA^*xA^*\cap X=\emptyset$. Set $S=\u(X)$ and $V=V_S$. 

Let $V'$
be the subspace of $V$ generated by the series $S\cdot xu$ for $u\in
A^*$. Note that for any element $T$ of $V'$, we have $T\cdot x=0$.
Indeed, if $T=\sum_{i=1}^n \alpha_i S\cdot xu_i$ for some $\alpha_i\in
K$, we have 
$T\cdot x=\sum_{i=1}^n \alpha_i S\cdot xu_ix=0$ since for any word $u$,
we have $S\cdot xux=0$.

We have $V'\ne 0$ because $(S\cdot x,1)=1$ and thus $S\cdot x$
is a nonzero element of $V'$. Since $S\cdot x\ne 0$, 
we have $S\notin V'$ and thus $V'\ne V$. By definition, $V'$ is
invariant.
Assume that $V'$ has an invariant complement $V''$. Then $S=S'+S''$
with $S'\in V'$ and $S''\in V''$. Since $S'\cdot x=0$, we have $S\cdot
x=S''\cdot x$. This implies that $S''\cdot x$ is in $V'$. Since
$V''$ invariant, we have also
$S''\cdot x\in V''$  and thus 
$S''\cdot x=0$. This implies that  $S\cdot x=0$, a contradiction. 
Thus $X$ is not
completely reducible.
\end{proof}
\begin{example}
Let $X=ab^*$. For $w=a$, the
set
$X\cap wA^*wA^*$ is reduced to $a$ and therefore $X$ is not repeating.
This shows that $X$ is not completely reducible.
\end{example}
\section{Birecurrent sets}\label{sectionBirecurrent}
In this section we introduce the class of birecurrent sets
and we prove their complete reducibility. In the first
part we state the main result (Theorem~\ref{theoremBifix}).
In the second one we introduce the notion of accessible
reversal of an automaton used in the proof of
Theorem~\ref{theoremBifix}.
In the last part, we give the proof of Theorem~\ref{theoremBifix}.

\subsection{Main result}\label{sectionMainResult}
A nonempty set $X$ is called \emph{recurrent} if its minimal automaton
is strongly connected. It is said to be \emph{birecurrent}
if $X$ and its reverse $\tilde{X}$ are recurrent.

The submonoid generated by a prefix code is recurrent. Indeed,
let $X$ be prefix code. The submonoid generated by $X$ is
right unitary, which means by definition that for any words $u,v$
if $u,uv\in X^*$, then $v\in X^*$. This implies that for any $x\in
X^*$,
one has $x^{-1}X^*=X^*$. Thus the minimal automaton of $X^*$ is of the
form $\A=(Q,i,i)$ with a set of terminal states reduced to the initial
state.
Since $\A$ is trim, this implies that $\A$ is strongly connected.

Thus the submonoid generated by a bifix code is birecurrent.
The following example shows that other cases occur.
\begin{example}
Let $X=\{a,ba\}$. The set $X$ is a prefix code which is not bifix. The
submonoid
$X^*$ is birecurrent. Indeed, the minimal automata of $X^*$ and
$\tilde{X}^*$
are represented in Figure~\ref{figBidelay}. Both are strongly
connected.
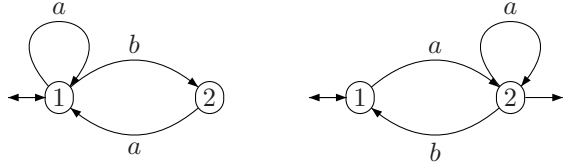
\begin{figure}[hbt]
\centering\gasset{Nadjust=wh}
\begin{picture}(60,20)(0,-5)
\put(0,0){
\begin{picture}(20,10)
\node[Nmarks=if,fangle=180](1)(0,0){$1$}\node(2)(20,0){$2$}

\drawloop(1){$a$}\drawedge[curvedepth=5](1,2){$b$}\drawedge[curvedepth=5](2,1){$a$}
\end{picture}
}
\put(40,0){
\begin{picture}(20,10)
\node[Nmarks=if,fangle=180](1)(0,0){$1$}\node[Nmarks=f](2)(20,0){$2$}

\drawloop(2){$a$}\drawedge[curvedepth=5](1,2){$a$}\drawedge[curvedepth=5](2,1){$b$}
\end{picture}
}
\end{picture}
\caption{The minimal automata of $X^*$ and $\tilde{X}^*$.}\label{figBidelay}
\end{figure}
\end{example}
We will prove the following statement. We assume in this section
that the field $K$ is of characteristic $0$.
\begin{theorem}\label{theoremBifix}
A birecurrent set is completely reducible.
\end{theorem}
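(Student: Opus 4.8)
The plan is to apply Theorem~\ref{theoremCompleteReducibility} (via Corollary~\ref{corollaryCR}) to the syntactic algebra of a birecurrent set $X$, using as condensation idempotent an idempotent lying in the minimal (or $0$-minimal) two-sided ideal of the syntactic monoid $M$. The intuition is that birecurrence gives us strong connectivity of both the minimal automaton of $X$ and that of $\tilde X$, and by Proposition~\ref{propositionMinimalIdeal} this produces a regular $\GD$-class $D$ whose $\GH$-classes are groups. Since the characteristic of $K$ is $0$, Maschke's theorem makes the group algebra of each such $\GH$-class semisimple, and this is what will furnish the complete reducibility of the condensed module over $e\Agoth e$.

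First I would set $S=\u(X)$, $V=V_S$, $\Agoth=\Agoth_S$, and let $M=\psi_S(A^*)$ be the syntactic monoid, which by Proposition~\ref{st14.7.1} is the transition monoid of the minimal automaton of $X$. Strong connectivity of that automaton lets me invoke Proposition~\ref{propositionMinimalIdeal} to get the distinguished regular $\GD$-class $D$. I would then pick an idempotent $e\in D$ whose $\GH$-class $H(e)$ is a group $G$, and regard $e$ as an element of $\Agoth$ via the linear extension $\psi_S$. The heart of the argument is to identify $e\Agoth e$, or at least its relevant quotient acting on $Ve$, with (a quotient of) the group algebra $K[G]$: elements of the form $exe$ with $x\in\Agoth$ should, modulo the elements killing $Ve$, reduce to $K$-combinations of the group elements $eme$ for $m\in H(e)$. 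Granting this, $Ve$ is a module over a semisimple algebra (by Maschke, since $\mathrm{char}\,K=0$) and is therefore completely reducible over $e\Agoth e$.

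The two remaining hypotheses of Corollary~\ref{corollaryCR} are where birecurrence is used in full. For $V=Ve\Agoth$: since the minimal representation $(\lambda,\mu,\gamma)$ has $K^n$ spanned by the $\lambda\mu(w)$, and $e$ lies in the minimal ideal so that every $\mu(w)$ can be routed through $e$ by strong connectivity, the image $Ve$ generates all of $V$ under the right $\Agoth$-action. For the faithfulness-type condition $\{v\in V\mid v\Agoth e=0\}=0$: here I would invoke strong connectivity of the minimal automaton of $\tilde X$, i.e.\ the left-right dual, which controls the column space $W$ spanned by the $\mu(w)\gamma$ and guarantees that no nonzero vector is annihilated by multiplication into $e$ on the right. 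Concretely, if $v\Agoth e=0$ then $v\mu(w)e=0$ for all $w$, and the dual strong connectivity forces $v=0$ because every terminal configuration $\mu(w)\gamma$ is reachable through $e$.

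I expect the main obstacle to be the precise identification of the action of $e\Agoth e$ on $Ve$ with a group-algebra action, and in particular verifying that the idempotent $e$, which a priori lives in the monoid $M$ and its linear span inside $\Agoth$, interacts correctly with the module structure so that $Ve\ne 0$ and the group $G=H(e)$ really acts. The delicate point is that $\Agoth$ is only a \emph{quotient} of $K[M]$, so I must check that passing to this quotient does not collapse the group structure in a way that destroys semisimplicity; since a quotient of a semisimple algebra is semisimple, it will suffice to show $e\Agoth e$ is a quotient of $eK[M]e\supseteq K[G]$, and then Maschke applies to $K[G]$ and descends. Once complete reducibility of $Ve$ over $e\Agoth e$ together with the two structural conditions is established, Corollary~\ref{corollaryCR} yields complete reducibility of $V$, hence of $X$, completing the proof.
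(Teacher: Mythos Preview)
Your overall strategy matches the paper's: apply Corollary~\ref{corollaryCR} with an idempotent $e$ in the $0$-minimal (or minimal) $\GD$-class $D$ of the syntactic monoid. However, there is a genuine gap in how you select $e$ and then verify the two structural hypotheses of Corollary~\ref{corollaryCR}.

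You pick an \emph{arbitrary} idempotent $e\in D$ and argue for $V=Ve\Agoth$ by saying that ``every $\mu(w)$ can be routed through $e$ by strong connectivity'', and for the annihilator condition by saying that ``every terminal configuration $\mu(w)\gamma$ is reachable through $e$''. Neither of these is justified for a generic $e\in D$: strong connectivity of $\A$ does not by itself give $\lambda\in Ve\Agoth$, nor does strong connectivity of $\tilde{\A}$ give $\gamma\in \Agoth e W$. The missing step is Proposition~\ref{st8.7.4}: birecurrence lets you choose the idempotent so that \emph{simultaneously} $ie=i$ and $eT=T$. (Concretely: start from any element of $D$, use strong connectivity of $\A$ to adjust on the right so that $i$ is in its image, then use strong connectivity of $\tilde{\A}$ to adjust on the left so that $T$ is fixed, and take an idempotent power.) Via Proposition~\ref{propositionUnitary} these two conditions become $\lambda e=\lambda$ and $e\gamma=\gamma$. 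The first gives $\lambda\in Ve$, hence $V=\lambda\Agoth\subset Ve\Agoth$ immediately. The second gives $\gamma\in eW$, hence $W=\Agoth\gamma\subset \Agoth eW$; since $\{v\in V\mid v\Agoth e=0\}$ is the orthogonal of $\Agoth eW=W$, it vanishes. Without this specific choice of $e$ your arguments remain heuristic.

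Your concern about identifying $e\Agoth e$ is, by contrast, overstated. Because $e$ lies in the $0$-minimal ideal, one has $eMe=G\cup\{0\}$ (or just $G$) on the nose---this is precisely Proposition~\ref{st8.7.4}(ii), not merely $H(e)=G$---so $e\Agoth e$ is spanned by $G$ together with $0$, hence is a quotient of $K[G]$, and Maschke in characteristic~$0$ applies directly without any further analysis of the quotient.
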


Theorem~\ref{theoremBifix} implies the following
result, originally from ~\cite{Reutenauer1981}, where the result is proved with
a partial converse.

\begin{corollary}\label{corollaryBifix}
 The submonoid generated by a rational bifix
code
is completely reducible. 
\end{corollary}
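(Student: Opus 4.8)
The plan is to show that the submonoid generated by a rational bifix code is birecurrent and then to invoke Theorem~\ref{theoremBifix}. Throughout I keep the standing assumption of this section that $K$ has characteristic $0$, under which that theorem is stated.

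Let $X$ be a rational bifix code, that is, a set that is simultaneously a prefix code and a suffix code, and set $M=X^*$. First I would recall the fact established at the beginning of this section: the submonoid generated by a prefix code is recurrent. Since $X$ is in particular a prefix code, this already gives that the minimal automaton of $M=X^*$ is strongly connected, so $M$ is recurrent.

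It remains to treat the reversal. Here I would use the elementary identity $\tilde{X^*}=(\tilde X)^*$, which follows from the fact that the reversal of a product $x_1x_2\cdots x_n$ with $x_i\in X$ is the product $\tilde{x_n}\cdots\tilde{x_2}\tilde{x_1}$ of the reversed factors, read in the opposite order. Since $X$ is a suffix code, its reversal $\tilde X$ is a prefix code. Applying the same fact about prefix codes to $\tilde X$ shows that $(\tilde X)^*=\tilde{M}$ is recurrent as well. Thus both $M$ and $\tilde M$ are recurrent, which is to say that $M$ is birecurrent.

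By Theorem~\ref{theoremBifix}, the birecurrent set $M=X^*$ is then completely reducible, which is the assertion of the corollary. There is in fact no genuine obstacle at the level of the corollary itself: all the difficulty is concentrated in Theorem~\ref{theoremBifix}, and the only points to verify here are the symmetry between prefix and suffix codes under reversal and the identity $\tilde{X^*}=(\tilde X)^*$, both of which are routine. One should only be mindful that the complete reducibility is asserted over a field of characteristic $0$; over a field whose characteristic divides the order of the group of the code the conclusion can fail, in accordance with the discussion in the introduction.
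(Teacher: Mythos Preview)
Your proof is correct and follows exactly the route the paper intends: the paper notes just before stating Theorem~\ref{theoremBifix} that the submonoid generated by a prefix code is recurrent and hence that the submonoid generated by a bifix code is birecurrent, so the corollary follows at once from Theorem~\ref{theoremBifix}. Your argument merely spells out the reversal step $\tilde{X^*}=(\tilde X)^*$ that the paper leaves implicit.
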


The proof of Theorem~\ref{theoremBifix}  uses
Theorem~\ref{theoremCompleteReducibility}. It is essentially the same as
that
given in~\cite{BerstelPerrinReutenauer2009} for
Corollary~\ref{corollaryBifix}.
\subsection{Accessible reversal of an automaton}
We begin this section with the following definition.

Let $\A=(Q,i,T)$ be a deterministic automaton. The \emph{accessible reversal} of $\A$, denote by
$\tilde{A}$, is the automaton obtained by successively
\begin{enumerate} 
\item[(i)]reversing the edges of $\A$,
\item[(ii)] using the accessible subset construction to build an equivalent
  deterministic automaton  using $T$ as initial state and the
subsets containing $i$ as set of terminal states,
\end{enumerate}

Thus  $\tilde{A}=(\tilde{Q},T,J)$ where $\tilde{Q}$ is
the family of  nonempty sets of the form
\begin{displaymath}
w^{-1}T=\{q\in Q\mid q\cdot w\in T\}
\end{displaymath}
and $J=\{U\in\tilde{Q}\mid i\in U\}$.
This automaton recognizes
$\tilde{X}$. Indeed, $y$ is in $\tilde{X}$ if and only if
$i\cdot \tilde{y}\in T$. And $i\cdot \tilde{y}$ is in $T$ if
and only if $i$ is in $T\cdot y$ (for the transitions of $\tilde{A}$).
Let $M=\varphi_\A(A^*)$ and $\tilde{M}=\varphi_{\tilde{\A}}(A^*)$ be the monoids of transitions of $\A$ and
$\tilde{\A}$
respectively. There is an antiisomorphism $m\mapsto\tilde{m}$
from $M$ onto $\tilde{M}$
such that the diagram below is commutative.
\begin{figure}[hbt]
\centering
\gasset{Nframe=n,Nadjust=wh}
\begin{picture}(20,22)
\node(1)(0,20){$A^*$}\node(2)(20,20){$A^*$}
\node(3)(0,0){$M$}\node(4)(20,0){$\tilde{M}$}

\drawedge(1,2){$\sim$}\drawedge(1,3){$\varphi_\A$}
\drawedge(2,4){$\varphi_{\tilde{\A}}$}
\drawedge(3,4){$\sim$}
\end{picture}
\end{figure}

In particular, for any word $w$, one has $m=\varphi_\A(w)$ if and
only if $\tilde{m}=\varphi_{\tilde{\A}}(\tilde{w})$.

The action of $M$ on the left on $\widetilde{Q}$ defined by $mU=V$
if $V=\{q\in Q\mid qm\in U\}$ is such that 
\begin{equation}
mU=V\Leftrightarrow U\tilde{m}=V. \label{eqTilde}
\end{equation}

The following statement is well known (see~\cite{Eilenberg1974} p. 48).
\begin{proposition}\label{propositionJojo}
If $\A$ is a trim deterministic automaton recognizing $X$, then $\tilde{\A}$
is the minimal automaton of $\tilde{X}$.
\end{proposition}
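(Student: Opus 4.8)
The plan is to show that $\tilde{\A}$ is accessible and reduced, since a deterministic automaton recognizing a set $Y$ that has both properties is canonically isomorphic to the minimal automaton of $Y$. Accessibility is built into the construction: $\tilde{\A}$ is obtained by the \emph{accessible} subset construction, so every state is reachable from the initial state $T$. The real work is therefore to identify the ``future'' of each state (the set of words labelling a path from that state to a terminal state) and to show that these futures are pairwise distinct and coincide with the nonempty left quotients $u^{-1}\tilde{X}$.

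First I would pin down the transition function of $\tilde{\A}$. Reversing the edges of $\A$ and applying the subset construction sends, for a state $U$ and a letter $a$, the set $U$ to $\{p\in Q\mid p\cdot a\in U\}$; iterating, a word $y$ sends the state $w^{-1}T$ to $(\tilde{y}\,w)^{-1}T$. Consequently $y$ leads from $w^{-1}T$ to a terminal state (one containing $i$) exactly when $i\cdot(\tilde{y}\,w)\in T$, i.e.\ when $\tilde{y}\,w\in X$, i.e.\ (reversing, using $\tilde{y}\,w\in X\iff \tilde{w}\,y\in\tilde{X}$) when $\tilde{w}\,y\in\tilde{X}$. Hence the future of the state $w^{-1}T$ is precisely $\tilde{w}^{-1}\tilde{X}$. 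Since $w\mapsto\tilde{w}$ is a bijection of $A^*$, these futures range over all nonempty left quotients of $\tilde{X}$, and the map $w^{-1}T\mapsto\tilde{w}^{-1}\tilde{X}$ respects transitions, the initial state ($1^{-1}T=T\mapsto\tilde{X}$), and the terminal states ($i\in w^{-1}T\iff w\in X\iff 1\in\tilde{w}^{-1}\tilde{X}$).

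It then remains to prove injectivity of this future map, and this is where trimness of $\A$ enters. Suppose two states $\widetilde{w_1}$-type sets $w_1^{-1}T$ and $w_2^{-1}T$ are distinct as subsets of $Q$; choose $q$ lying in one but not the other, say $q\cdot w_1\in T$ and $q\cdot w_2\notin T$. As $\A$ is trim, $q$ is accessible, so $i\cdot u=q$ for some $u$; then $uw_1\in X$ while $uw_2\notin X$, which after reversal gives $\tilde{u}\in\widetilde{w_1}^{-1}\tilde{X}$ and $\tilde{u}\notin\widetilde{w_2}^{-1}\tilde{X}$. Thus the two futures differ, $\tilde{\A}$ is reduced, and together with accessibility the future map is an isomorphism from $\tilde{\A}$ onto the minimal automaton of $\tilde{X}$. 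I expect the main obstacle to be purely bookkeeping: getting the direction of the reversed transitions right and keeping the several uses of the reversal-versus-quotient identity $\tilde{y}\,w\in X\iff\tilde{w}\,y\in\tilde{X}$ consistent. The hypothesis is used only in the injectivity step, where accessibility of $\A$ supplies the separating word $u$.
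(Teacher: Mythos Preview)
Your argument is correct and takes essentially the same route as the paper's: both set up the bijection $w^{-1}T\mapsto\tilde{w}^{-1}\tilde{X}$ (equivalently $Xw^{-1}$) between the states of $\tilde{\A}$ and those of the minimal automaton of $\tilde{X}$, with you unpacking the transition and future computations that the paper compresses into ``as one may easily verify.'' One small correction to your closing remark: accessibility of $\A$ is needed not only for injectivity but also for coaccessibility of $\tilde{\A}$---that is, for the implication $w^{-1}T\ne\emptyset\Rightarrow\tilde{w}^{-1}\tilde{X}\ne\emptyset$, which the paper records as its opening equivalence $w^{-1}T\ne\emptyset\Leftrightarrow Xw^{-1}\ne\emptyset$; without it, ``accessible and reduced'' alone could still leave a single dead state in $\tilde{\A}$.
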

\begin{proof}
 Since $\A$ is trim, for any word $w$, one has
$w^{-1}T\ne\emptyset$ if and only if $Xw^{-1}\ne\emptyset$.
Moreover, for any
$w,w'\in A^*$, one has
\begin{displaymath}
w^{-1}T=w'^{-1}T\Leftrightarrow Xw^{-1}=Xw'^{-1}
\end{displaymath}
as one may easily verify.
Since the nonempty sets $Xw^{-1}$ are the reversals of the states
of the minimal automaton $\tilde{X}$, the map $w^{-1}T\mapsto \tilde{w}^{-1}\tilde{X}$
is a bijection which identifies $\tilde{\A}$ with the minimal
automaton of $\tilde{X}$.
\end{proof}
Thus, in particular, if $\A$ is the minimal automaton of $X$, then
$\tilde{\A}$ is the minimal automaton of $\tilde{X}$.

\begin{example}\label{exampleBirecurrent}
Let $\A=(Q,i,T)$ with $Q=\{1,2,3,4\}$, $i=1$ and $T=\{1,2\}$
be the strongly connected automaton represented on the left
in Figure~\ref{figureWeakly}.
\begin{figure}[hbt]
\centering\gasset{Nadjust=wh}
\begin{picture}(60,25)
\put(0,0){
\begin{picture}(30,20)
\node[Nmarks=if,fangle=180](1)(0,20){$1$}\node[Nmarks=f](2)(20,20){$2$}
\node(3)(20,0){$3$}\node(4)(0,0){$4$}

\drawedge(1,2){$a$}\drawedge[curvedepth=3](1,3){$b$}
\drawedge(2,3){$a,b$}
\drawedge(3,4){$a$}\drawedge[curvedepth=3](3,1){$b$}
\drawedge(4,1){$a,b$}
\end{picture}
}
\put(40,0){
\begin{picture}(30,20)
\node[Nmarks=if,fangle=180](12)(0,20){$1,2$}\node(23)(20,20){$2,3$}
\node(34)(20,0){$3,4$}\node[Nmarks=f,fangle=180](14)(0,0){$1,4$}

\drawedge[ELside=r](23,12){$a,b$}\drawedge[curvedepth=3](12,34){$b$}
\drawedge[ELside=r](34,23){$a$}
\drawedge[ELside=r](14,34){$a,b$}\drawedge[curvedepth=3](34,12){$b$}
\drawedge[ELside=r](12,14){$a$}
\end{picture}
}
\end{picture}
\caption{The automata $\A$ and $\tilde{\A}$.}\label{figureWeakly}
\end{figure}
The accessible reversal $\tilde{\A}$ of $\A$ is represented in Figure~\ref{figureWeakly}
on the right.
Since $\tilde{\A}$ is strongly connected, $X$ is birecurrent.
Note that $X$ is not a submonoid since $a,abb\in X$ although
$aabb\notin X$.
\end{example}

\subsection{Proof of the main result}\label{sectionBirecurrentSetsCompletelyReducible}

We begin with two preliminary statements.

\begin{proposition}\label{propositionUnitary}
Let $\A=(Q,i,T)$ be the minimal automaton of a set $X$.
Set $S=\u(X)$, $\tilde{S}=\u(\tilde{X})$ and $\varphi=\varphi_\A$. For any word $x\in A^*$, one
has
\begin{enumerate}
\item[\rm (i)] $i\varphi(x)=i$ if and only if $S\cdot x=S$,
\item[\rm(ii)] $\varphi(x)T=T$ if and only if $\tilde{S}\cdot \tilde{x}=\tilde{S}$.
\end{enumerate}
\end{proposition}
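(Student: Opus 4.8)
The plan is to prove (i) essentially by unwinding the definitions of the minimal automaton and of the shift $S\cdot x$, and then to derive (ii) by applying (i) to the accessible reversal $\tilde{\A}$ together with the action correspondence~(\ref{eqTilde}).

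For (i), I would first recall that the initial state of the minimal automaton is $i=1^{-1}X=X$ and that its transitions send $i$ to $i\varphi(x)=x^{-1}X$. Hence $i\varphi(x)=i$ holds exactly when $x^{-1}X=X$, viewed as an equality of states, that is, of nonempty subsets of $A^*$. On the series side, $(S\cdot x,v)=(S,xv)$ equals $1$ precisely when $v\in x^{-1}X$, so $S\cdot x=\u(x^{-1}X)$; since $S=\u(X)$, the identity $S\cdot x=S$ is again equivalent to $x^{-1}X=X$. Both conditions therefore reduce to the same set equality, which gives (i).

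For (ii), the idea is to apply (i) to $\tilde{\A}$ in place of $\A$. Since $\A$ is the minimal automaton of $X$, it is trim, so by Proposition~\ref{propositionJojo} the accessible reversal $\tilde{\A}$ is the minimal automaton of $\tilde{X}$, with initial state $T$. Applying (i) to $\tilde{\A}$, $\tilde{X}$ and $\tilde{S}$ at the word $\tilde{x}$ yields that $T\varphi_{\tilde{\A}}(\tilde{x})=T$ if and only if $\tilde{S}\cdot\tilde{x}=\tilde{S}$. It then remains to match the condition $\varphi(x)T=T$ with $T\varphi_{\tilde{\A}}(\tilde{x})=T$. Here $\varphi(x)T$ is the left action of $M$ on $\tilde{Q}$, namely $\varphi(x)T=\{q\in Q\mid q\varphi(x)\in T\}$. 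Writing $m=\varphi(x)$, so that $\tilde{m}=\varphi_{\tilde{\A}}(\tilde{x})$, the correspondence~(\ref{eqTilde}) gives $\varphi(x)T=V\Leftrightarrow T\varphi_{\tilde{\A}}(\tilde{x})=V$ for any state $V$; taking $V=T$ identifies the two conditions and finishes (ii).

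The argument is mostly bookkeeping, and the only point demanding care is keeping the left action $mU$ of $M$ on $\tilde{Q}$ separate from the right (transition) action $U\tilde{m}$ of $\tilde{\A}$, and invoking~(\ref{eqTilde}) with the correct orientation so that fixing the initial state $T$ of $\tilde{\A}$ under $\tilde{x}$ corresponds exactly to $\varphi(x)T=T$.
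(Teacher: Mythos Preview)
Your proof is correct and follows essentially the same approach as the paper: for (i) both arguments reduce each side to the equality $x^{-1}X=X$ via the definition of the minimal automaton and of $S\cdot x$, and for (ii) both transfer the question to $\tilde{\A}$ using~\eqref{eqTilde}. Your version is a bit more explicit in invoking Proposition~\ref{propositionJojo} to ensure that $\tilde{\A}$ really is the minimal automaton of $\tilde{X}$ before applying (i), which the paper leaves implicit.
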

\begin{proof}
Assume that $i\cdot x=i$. Then, for any $u\in A^*$,
\begin{displaymath}
(S\cdot x,u)=1\Leftrightarrow
xu\in X\Leftrightarrow i\cdot xu\in
  T\Leftrightarrow i\cdot u\in
  T\Leftrightarrow (S,u)=1.
\end{displaymath}
Thus $S\cdot x=S$. Conversely, if $S\cdot x=S$, then for any $u\in A^*$,
\begin{displaymath}
i\cdot xu\in T\Leftrightarrow (S,xu)=1\Leftrightarrow (S\cdot x,u)=1
\Leftrightarrow (S,u)=1\Leftrightarrow i\cdot u\in T
\end{displaymath}
which implies that $x^{-1}X=X$.
In view of the definition of the minimal automaton, this
shows that $i\cdot x=i$ . Thus proves (i). The proof of (ii) is
the same, using the fact that, by \eqref{eqTilde}, one has
$\varphi(x)T=T$ if and only if
$T\cdot \tilde{x}=T$ in the automaton $\tilde{\A}$.
\end{proof}

\begin{proposition}\label{st8.7.4}  
 Let $\A=(Q,i,T)$ be the minimal automaton of
a  birecurrent set $X$ \textcolor{red}{containing the empty word}. Set $\varphi=\varphi_\A$ and $M=\varphi(A^*)$.
  The monoid $M$
   contains an idempotent $e$ such that
  \begin{itemize}
  \item[\textup{(i)}] $ie=i$ and $eT=T$.
  \item[\textup{(ii)}] The set $eMe$ is the union of a finite group
    $G$ and of the element $0$, provided $0 \in M$.
  \end{itemize} 
\end{proposition}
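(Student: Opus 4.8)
The plan is to work entirely inside the transition monoid $M=\varphi(A^*)$ of the minimal automaton $\A$ and to locate $e$ in its minimal two-sided ideal $D$. Since $X$ is birecurrent, $\A$ is strongly connected, and by Proposition~\ref{propositionJojo} its accessible reversal $\tilde\A$ (the minimal automaton of $\tilde X$) is strongly connected as well; by Proposition~\ref{propositionMinimalIdeal}, $D$ is a regular $\GD$-class, each of whose $\GH$-classes is either a null class (squaring to $0$) or a group, and $0$, when present, is the empty map. I would first dispatch (ii): for any idempotent $e\in D$ and any $m\in M$, the element $eme$ lies in the ideal $D\cup\{0\}$, and if it is nonzero then, since $eme\in eM$ and $eme\in Me$ lie in the same $\GD$-class $D$ as $e$, stability of the finite monoid $M$ gives $eme\,\GR\,e$ and $eme\,\GL\,e$, i.e. $eme\in H(e)$. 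Hence $eMe=H(e)\cup\{0\}$, where $G=H(e)$ is a group because $e$ is an idempotent of $D$, and $0$ occurs exactly when $0\in M$. Thus (ii) holds for \emph{any} idempotent of $D$, and the whole weight of the proposition is in producing one that also satisfies (i).

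For (i) I would translate the two requirements through Proposition~\ref{propositionUnitary}: for an idempotent $e$ one has $ie=i$ iff $i\in\mathrm{Im}(e)$, and $eT=T$ iff $T$ is a union of fibres of $e$ (equivalently, via the reversal, $\tilde S\cdot\tilde x=\tilde S$). The engine of the argument is the remark that if $m\in D$ satisfies $im=i$, then $im^2=i$, so $m^2\ne0$ (the empty map is undefined at $i$); hence by Proposition~\ref{propositionMinimalIdeal} the $\GH$-class of $m$ is a group and its idempotent power $m^\omega$ lies in $H(m)$, so $m^\omega$ is an idempotent with the same image and the same fibres as $m$, and in particular $i\,m^\omega=i$. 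Symmetrically, $mT=T$ with $m\in D$ already forces $m$ into a group $\GH$-class, since $m^2T=T\ne\emptyset$. This reduces everything to producing a single $m\in D$ with $im=i$ and $mT=T$, after which $e=m^\omega$ is the desired idempotent: indeed $im=i$ gives $ie=i$, and if $m$ lies in the submonoid $\mathrm{Sat}(T)=\{m\in M\mid mT=T\}$ then $e=m^\omega\in\mathrm{Sat}(T)$ gives $eT=T$.

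To build such an $m$ I would start from the two one-sided solutions, which are easy: strong connectivity of $\A$ yields an idempotent $f\in D$ with $if=i$, and strong connectivity of $\tilde\A$, transported by the antiisomorphism $m\mapsto\tilde m$, yields an idempotent $g\in D$ with $gT=T$. The goal is then to find $u,v\in\mathrm{Sat}(T)$ such that $z=\varphi(u)\,g\,\varphi(v)$ fixes $i$, for then $z\in\mathrm{Sat}(T)\cap D$ and the engine produces $e=z^\omega$. Concretely it suffices to steer $i$ into $\mathrm{dom}(g)$ by a $T$-saturating word $u$ and to steer the resulting point of $\mathrm{Im}(g)$ back to $i$ by a $T$-saturating word $v$; the transversal issue — that $z$ must fall into a group $\GH$-class rather than a null one — is precisely what the engine removes for free. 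I expect the hard part to be exactly this simultaneous steering: one needs a word that realizes a prescribed move of states in $\A$ while at the same time fixing the relevant subset under the left action on $\tilde{Q}$, so that it stays in $\mathrm{Sat}(T)$. This is where recurrence of the reversal $\tilde X$, and not merely of $X$, is indispensable, and where the two strong-connectivity hypotheses must be used \emph{together} rather than separately; organizing this coupling — for instance by carrying out the reachability argument inside the minimal ideal of the submonoid $\mathrm{Sat}(T)$, or equivalently on the pair $(\A,\tilde\A)$ — is the crux of the whole proof.
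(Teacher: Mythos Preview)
Your treatment of (ii) is correct and matches the paper's, and your reduction of (i) to finding a single $m\in D$ with the two properties is sound. The gap is that you never actually produce such an $m$: you explicitly flag the ``simultaneous steering'' inside $\mathrm{Sat}(T)$ as the crux and leave it open. As stated, the proposal is an outline that stops short of the decisive step.

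The paper's argument shows that this steering problem is a phantom, because the two conditions can be imposed \emph{sequentially} with no interaction. The key observation you are missing is that for elements of minimal nonzero rank, left multiplication does not change the image: if $m\in D$ and $nm\ne 0$, then $\mathrm{Im}(nm)\subseteq\mathrm{Im}(m)$ and both have the minimal rank, so $\mathrm{Im}(nm)=\mathrm{Im}(m)$. The paper therefore proceeds as follows. Pick any $w$ with $\varphi(w)\in D$; by strong connectivity of $\A$ choose $u$ with $i\in Q\cdot wu$ and set $w'=wu$, so $i\in\mathrm{Im}(\varphi(w'))$. Now use strong connectivity of $\tilde\A$ freely (no $\mathrm{Sat}(T)$ constraint) to find $v$ with $T\cdot\tilde{w}'v=T$ in $\tilde\A$, i.e.\ $\varphi(\tilde v w')T=T$. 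Since this is a left multiple of $\varphi(w')$ and is nonzero, its image is still $\mathrm{Im}(\varphi(w'))\ni i$. Because $\varphi(\tilde v w')T=T$ forces $\varphi(\tilde v w')^2\ne 0$, its $\GH$-class is a group and some power is an idempotent $e$; then $eT=T$ and $i\in\mathrm{Im}(e)$, whence $ie=i$. No coupling between the two strong-connectivity arguments is needed: you asked for $im=i$, but only $i\in\mathrm{Im}(m)$ is required before passing to the idempotent power, and that weaker condition is stable under left multiplication inside $D$.
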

\begin{proof}
We assume that $M$ contains a zero. The other case is similar.
Since $\A$ is strongly connected, the zero is the empty map $0$.
By Proposition~\ref{propositionMinimalIdeal}, the monoid $M$
has a unique $0$-minimal two-sided ideal $D$ which is a regular $\GD$-class.
Let $w$ be a word such that $\varphi(w)$ belongs to 
 $D$. Since $\A$ is strongly connected there
is a word $u$ such that $i\in Q\cdot wu$. Set $w'=wu$. 
Then $\varphi(w')$ is in $D$. \textcolor{red}{
Moreover,
since $i\in T$ and $i\in Q\cdot w'$, we have 
$T\cdot \tilde{w}'\ne\emptyset$.
}
 Thus, since $\tilde{A}$ is strongly connected, there
is a $v$ such that $T\cdot \tilde{w}'v=T$ and thus
$\varphi(\tilde{v}w')T=T$.  Then $\varphi(\tilde{v}w')$ is in $D$.
Since $\varphi(\tilde{v}w')T=T$, we cannot have
$\varphi(\tilde{v}w')^2=0$
and thus  there is a power $x$ of $\tilde{v}w'$ such that
$e=\varphi(x)$ is an idempotent. Since $i\in Q\cdot w'$, $i$ is in
the image of $e$ and thus we have
$ie=i$ since $e$ is idempotent. We have also $eT=T$. Moreover the non
zero elements of the set $eMe$  form the group of the $\GD$-class $D$. 
\end{proof}

The group $G$ defined above is called the Suschkevitch group of the
monoid $M$. It is the group of the $0$-minimal ideal of $M$.

Note that the formulation of Proposition~\ref{st8.7.4} can be used
to define a birecurrent set by a condition on its syntactic monoid.
Consider indeed $X\subset A^*$, let $M$ be its syntactic monoid
and let $\varphi:A^*\rightarrow M$ be the syntactic morphism.
Then $X$ is recurrent if and only if there is a non-zero idempotent $e\in M$
such that
\begin{enumerate}
\item[(i)] $eMe$ is the union of a group $G$ and of $0$ provided $0\in
  M$.
\item[(ii)] There is a subset $P$ of $G$ such that $X=\{x\in A^*\mid
  e\varphi(x)e\in P\}$.
\end{enumerate}
\begin{example}
Consider again the birecurrent set of
Example~\ref{exampleBirecurrent}.
The minimal ideal of $M$ is represented in Figure~\ref{figureMinimalIdealRec}.
\begin{figure}[hbt]

\begin{displaymath}
\def\rb{\hspace{2pt}\raisebox{0.8ex}{*}}\def\vh{\vphantom{\biggl(}}
    \begin{array}%
    {r|@{}l@{}c|@{}l@{}c|}%
    \multicolumn{1}{r}{}&\multicolumn{2}{c}{1,3}&\multicolumn{2}{c}{2,4}\\
    \cline{2-5}
    1,2/3,4& \vh\rb &b &\vh\rb  &ba \\
    \cline{2-5}
    1,4/2,3&\vh\rb &ab &\vh\rb& aba \\
    \cline{2-5}
    \end{array}
\end{displaymath}
\caption{The $0$-minimal ideal of $M$.}\label{figureMinimalIdealRec}
\end{figure}
The idempotent $e=\varphi_\A(b^2)$ is such that $1e=1$ and $eT=T$. The
set $eMe$ is the group $\Z/2\Z$.
\end{example}
We now give the proof of Theorem~\ref{theoremBifix}.

\begin{proof}
Let $X$ be a birecurrent set and let $S=\u(X)$.
Let $\A=(Q,i,T)$ be the minimal automaton of $X$. 

\textcolor{red}{
We may assume that $i\in T$ or equivalently that $X$
contains the empty word. Indeed, let $t\in T$ and consider the set
$X'$ recognized by the automaton $\A'=(Q,t,T)$. Then $X'$
is a birecurrent set containing the empty word. Moreover,
since $\A$ is strongly connected, $X$ is a residual of $X'$
and thus $X$ is completely reducible if $X'$ is, by Theorem~\ref{theoremVariety}.
}

Set $\varphi=\varphi_\A$ and $\psi=\psi_S$.
  By Proposition \ref{st8.7.4}, there exists a word $x\in A^*$
 such that $\varphi(x)$ is idempotent, $i\varphi(x)=i$ and $\varphi(x)T=T$
and such that $\varphi(xA^*x)$ is the
  union of $0$ (if $0 \in \varphi(A^*)$) and of a finite group.

Set $M=\psi(A^*)$ and $e=\psi(x)$. By Proposition \ref{st14.7.1}, $e$
is an idempotent of $M$ such that $eMe$  is the
  union of $0$ (if $0 \in M$) and of a finite group
(note that $0\in M$ if and only if $\varphi(A^*)$ contains a zero).
Moreover,
by Proposition~\ref{propositionUnitary} and its dual, we have
$(S,u)=(S,ux)=(S,xu)$ for any $u\in A^*$.

Set $V=V_S$. Taking a basis of $V$, we may consider $M$ as
a monoid of $n\times n$-matrices and $V$ as the space of
row $n$-vectors. Let $\lambda$ be the row $n$-vector representing $S$
and let $\gamma$ be the column $n$-vector such that 
$(S,w)=\lambda\psi(w)\gamma$ for all $w\in A^*$.

Set $\Agoth$ be the algebra generated by $M$. Then $V$ is a finite dimensional $\Agoth$-module.
We verify that the conditions of 
Corollary~\ref{corollaryCR} are satisfied by $\Agoth$, $V$ and
$e$. Since $e\Agoth e$ is the algebra generated by $eMe$,
by Maschke's theorem, $Ve$ is completely reducible over $e\Agoth e$.
Next, since $i\varphi(x)=i$,
we have $\lambda e=\lambda$ by Proposition~\ref{propositionUnitary}.
 Since $V$ is generated by the vectors $\lambda m$ for
$m\in M$, it is generated by the set $\lambda eM$. Thus the condition
that $V$ is generated by the set $Ve\textcolor{red}{\Agoth}$ is also satisfied. Finally,
let $W$ be the space of column $n$-vectors. Symmetrically to the fact
that $V$ is generated by the elements of the set $\lambda m$, 
for $m\in M$, the space $W$ is generated by the elements of
the set $m\gamma$ for $m\in M$.
By assertion (ii) of Proposition~\ref{propositionUnitary}, since $\varphi(x)T=T$, we
 have $e\gamma=\gamma$. Thus $W$ is generated by the elements of the
set
$Me\gamma$, which implies that $W$ is the space generated by $\textcolor{red}{\Agoth}eW$. 
Now, one has $v\Agoth e=0$ if and
only if $v\textcolor{red}{\Agoth}eW=0$ and so $\{v\in V\mid v\Agoth e=0\}$ is the orthogonal of
the space generated by $\Agoth eW$. Thus we conclude that 
$\{v\in V\mid v\Agoth e=0\}=0$. This shows that all conditions in (ii)
are satisfied.

By Corollary~\ref{corollaryCR}, the monoid $M$ is
completely reducible and thus the proof is complete.
\end{proof}
Note that for any birecurrent set $X$, by Theorem~\ref{theoremCompleteReducibility}, the
irreducible
components of the syntactic representation of $X$ are in bijection
with the irreducible components of the permutation representation
of the group. We illustrate this in the following example.
\begin{example}
Consider again the birecurrent set of
Example~\ref{exampleBirecurrent}.
The syntactic representation of $X$ is obtained from the linear
representation associated with the automaton $\A$ after taking
the quotient of the space $K^Q$ by the subspace generated by
$1+3-2-4$. Thus, in the basis $1,2,3$, we have
\begin{displaymath}
\psi(a)=\begin{bmatrix}0&1&0\\0&0&1\\1&-1&1\end{bmatrix},\quad
\psi(b)=\begin{bmatrix}0&0&1\\0&0&1\\1&0&0\end{bmatrix}
\end{displaymath}
The subspace generated by the vector $1+3$ is invariant. It has
an invariant complement formed of the vectors with zero sum of
coefficients.
In the basis $1+3,1-3,2-4$, the matrices $\psi(a),\psi(b)$
take the following form.
\begin{displaymath}
\begin{bmatrix}1&0&0\\0&0&1\\0&-1&0\end{bmatrix},\quad
\begin{bmatrix}1&0&0\\0&-1&0\\0&-1&0\end{bmatrix}
\end{displaymath}
Thus the syntactic representation of $X$ is the sum of of
two irreducible representations of dimensions $1$ and $2$.
\end{example}
\section{Cyclic sets}\label{sectionCyclic}
In the first part of
this section, we recall the definition of a cyclic set which was
introduced in~\cite{BerstelReutenauer1990}. In the second part, we
give a new proof of their complete reducibility.
In the last part, we connect the notion of cyclic sets with
that of  monoid characters.
\subsection{Cyclic and strongly cyclic sets}
A  subset 
$X$ of a monoid $M$ is \emph{cyclic} if it satisfies the two
following conditions.
\begin{enumerate}
\item[(i)] For any $u,v\in M$, one has $uv\in X$ if and only if $vu\in
  X$.
\item[(ii)] For any $w\in M$ and any integer $n\ge 1$, one has
$w^n\in X$ if and only if $w\in X$.
\end{enumerate}

If $\varphi$ is a morphism from a monoid $M$ onto a monoid $N$, for
any subset $X$ of $N$, the set $\varphi^{-1}(X)$ is cyclic if and only
if
$X$ is cyclic.

\begin{example}
The cyclic subsets of $a^*$ are the sets $\emptyset$, $1$, $a^+$ and $a^*$.
\end{example}

A rational set of words $X$  is \emph{strongly cyclic} if there is a morphism
$\varphi$ from $A^*$ into
 a finite monoid $M$ which has a zero
such that $X=\{x\in M\mid 0\notin \varphi(x^*)\}$.
Let $\A$ be a deterministic automaton with a set $Q$ of states.
The set of cyclically nonzero words defined by $\A$ is the set
\begin{equation}
X=\{x\in A^*\mid Q\cdot x^n\ne\emptyset \text{ for all $n\ge 0$}\}
.\label{eqStronglyCyclic}
\end{equation}
Note that since $Q$ is finite, for any $x\in X$ there is a $q\in Q$
such that $q\cdot x^n\ne 0$ for all $n\ge 0$.
\begin{proposition}
A set of words $X$ is strongly cyclic if and only if it
is the set of cyclically nonzero words defined by a
deterministic
automaton.
\end{proposition}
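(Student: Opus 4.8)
The plan is to prove the two implications by direct constructions that translate between a finite monoid with zero and a deterministic automaton, in each case reading the cyclically nonzero condition off the condition $\varphi(x^n)\neq 0$.

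For the implication from cyclically nonzero to strongly cyclic, I would start from a deterministic automaton $\A$ with state set $Q$ defining $X$ via \eqref{eqStronglyCyclic}, and take its transition monoid $M=\varphi_\A(A^*)$, viewed as a monoid of partial maps on $Q$. If the empty map does not already belong to $M$, I adjoin it as a zero $0$, obtaining a finite monoid $M^0$ with zero, and I let $\varphi\colon A^*\to M^0$ be the transition morphism. The key observation is that $\varphi(w)$ is the empty map exactly when $Q\cdot w=\emptyset$; applying this to $w=x^n$ gives $\varphi(x^n)=0$ if and only if $Q\cdot x^n=\emptyset$. Hence $0\notin\varphi(x^*)$ if and only if $Q\cdot x^n\neq\emptyset$ for all $n\ge 0$, so the strongly cyclic set defined by $\varphi$ is exactly $X$.

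For the converse, suppose $X=\{x\in A^*\mid 0\notin\varphi(x^*)\}$ for a morphism $\varphi\colon A^*\to M$ into a finite monoid with zero $0$. I would build the deterministic automaton with state set $Q=M\setminus\{0\}$ (nonempty, since $1\neq 0$) and transitions $m\cdot a=m\varphi(a)$ whenever $m\varphi(a)\neq 0$, undefined otherwise; extending to words, $m\cdot w$ is defined and equals $m\varphi(w)$ precisely when $m\varphi(w)\neq 0$, because the zero is absorbing and so a vanishing partial product forces the whole product to vanish. Then $Q\cdot x^n\neq\emptyset$ if and only if some $m\neq 0$ satisfies $m\varphi(x^n)\neq 0$, and taking $m=1$ shows this holds if and only if $\varphi(x^n)\neq 0$. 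Thus $X$ is the set of cyclically nonzero words of this automaton.

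The only delicate point, and the part I would write out most carefully, is the bookkeeping around the zero: checking that absorption makes the empty map the genuine zero of $M^0$, that $m=1$ always witnesses nonemptiness of $Q\cdot x^n$ in the forward direction, and that the conventions $Q\neq\emptyset$ and $0\neq 1$ keep both constructions well defined (in particular, when the empty map never occurs one gets $X=A^*$, consistent with both descriptions). None of these steps presents a real obstacle; the argument is essentially a dictionary between the absorbing zero of the monoid and the undefined transitions of the automaton.
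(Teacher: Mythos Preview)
Your proof is correct and follows essentially the same approach as the paper: building the automaton on $M\setminus\{0\}$ for one implication and passing to the transition monoid for the other. Your treatment is in fact slightly tidier than the paper's, since you explicitly adjoin the empty map as zero when the transition monoid lacks one, and your direct use of the equivalence ``$\varphi(x^n)$ is the empty map $\Leftrightarrow Q\cdot x^n=\emptyset$'' bypasses the paper's detour through an idempotent power $\varphi(x^k)$.
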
 
\begin{proof}
The condition is necessary. Indeed, let $\varphi:A^*\rightarrow
M$
be a morphism into a finite monoid $M$ which has a zero such that
$X=\{x\in M\mid 0\notin \varphi(x^*)\}$. Let $\A$ be the automaton
with $M\setminus 0$ as set of states and with transitions defined by
$m\cdot a=m\varphi(a)$ if $m\varphi(a)\ne 0$. For any $x\in X$,
one has $1\cdot x^n\ne\emptyset$ for all $n\ge 0$. Thus $x$ satisfies
condition~\eqref{eqStronglyCyclic}. Conversely, if
$m\cdot x^n\ne\emptyset$ for some $m\in M\setminus 0$ and
for all $n\ge 0$, then
$0\notin \varphi(x^*)$. 

The condition is also sufficient. Indeed, assume that $X$ is the set
of cyclically nonzero
words defined by the deterministic automaton $\A$. Let $M$ be the transition
monoid
of  $\A$ and let $\varphi$ be the canonical morphism
from $A^*$ onto $M$. For any $x\in X$, one has $\varphi(x^n)\ne 0$
and thus $0\notin\varphi(x^*)$. Conversely, if $0\notin\varphi(x^*)$,
let $k$ be an integer such that $\varphi(x^k)$ is idempotent. Since
$\varphi(x^k)\ne 0$, there is a state $q$ such that $q\cdot
x^k\ne\emptyset$.
Then $q\cdot
x^{kn}\ne\emptyset$ for any $n\ge 0$ and consequently $q\cdot
x^{n}\ne\emptyset$ for any $n\ge 0$. Thus $X$ is strongly cyclic.
\end{proof}

 For a 
sequence of sets
$X_1,\ldots,X_n$ such that  $X_1\supset X_2\supset\ldots\supset X_n$,
the \emph{chain of differences} of the sequence is the set
 \begin{equation}
X=(X_1-X_2)+(X_3-X_4)+\ldots.\label{eqChains}
\end{equation}
The integer $n$ is called the length of the chain. According to the
parity
of $n$ the last term of the chain is $(X_{n-1}-X_n)$ or $(X_n)$. Note
that one can also write \eqref{eqChains} as $X=X_1-Y$
with $Y=(X_2-X_3)+(X_4-X_5)+\ldots$  a chain of differences of
length $n-1$ such that $Y\subset X$.

The following result is from~\cite{BealCartonReutenauer1996} (see the
proof of Theorem
10).
It shows in particular that any  cyclic rational set of words
 is a boolean combination of strongly
cyclic rational sets.
\begin{proposition}\label{propositionStrictlyCyclic}
Any  cyclic rational  set of words $X$ is a  chain of differences of
 strongly
cyclic rational sets. 
\end{proposition}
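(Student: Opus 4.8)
The plan is to pass to the syntactic monoid and reduce the statement to an order-theoretic decomposition of the set of idempotents. Since $X$ is rational its syntactic monoid $M$ is finite; let $\varphi\colon A^{*}\to M$ be the syntactic morphism, so that $X=\varphi^{-1}(P)$ with $P=\varphi(X)$, and $P$ is a cyclic subset of $M$ by the remark following the definition of cyclic sets. Write $E$ for the set of idempotents of $M$ and, for $m\in M$, let $m^{\omega}$ denote the unique idempotent that is a positive power of $m$. Condition~(ii) of cyclicity applied to $m=\varphi(x)$ gives $\varphi(x)\in P$ if and only if $\varphi(x)^{\omega}\in P$, and since $\varphi(x)^{\omega}\in E$ this reads
\[
X=\{x\in A^{*}\mid \varphi(x)^{\omega}\in G_{0}\},\qquad G_{0}:=P\cap E ,
\]
so that the whole set is governed by the set of idempotents $G_{0}$.

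The key structural step, which I expect to be the conceptual crux, is to identify which $G\subseteq E$ produce strongly cyclic sets. Order $M$ by $m\preceq n\iff m\in MnM$; the two-sided ideals of $M$ are exactly the $\preceq$-down-sets, so the sets $E\setminus(I\cap E)$ for $I$ an ideal are exactly the $\preceq$-up-sets of $E$. For an up-set $G$ of $E$ I would take $I=M(E\setminus G)M$ and check, using that $G$ is an up-set, that $I\cap E=E\setminus G$. Passing to the Rees quotient $M/I$, which has a zero, the composite $A^{*}\to M\to M/I$ witnesses that
\[
\sigma(G):=\{x\in A^{*}\mid \varphi(x)^{\omega}\in G\}
\]
is strongly cyclic, because in a finite monoid $0\notin\psi(x^{*})$ for the composite $\psi$ is equivalent to $\varphi(x)^{\omega}\notin I$, i.e.\ to $\varphi(x)^{\omega}\in G$ (when $G=E$ one has $I=\emptyset$ and $\sigma(E)=A^{*}$, which is strongly cyclic via the morphism onto $\{0,1\}$ sending each letter to $1$). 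The map $\sigma$ is monotone and satisfies $\sigma(G)\setminus\sigma(G')=\sigma(G\setminus G')$ for $G'\subseteq G$, so it carries a nested chain of up-sets to a nested chain of strongly cyclic sets, and a chain of differences of up-sets to the corresponding chain of differences in the sense of~\eqref{eqChains}.

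It remains to write $G_{0}$ as a chain of differences $(G_{1}\setminus G_{2})\cup(G_{3}\setminus G_{4})\cup\cdots$ of up-sets $G_{1}\supseteq G_{2}\supseteq\cdots$ of $E$; applying $\sigma$ then gives $X=\sigma(G_{0})$ as a chain of differences of strongly cyclic rational sets. This is where condition~(i) of cyclicity enters, and it is the subtle point that needs care: two idempotents lying in the same $\GD$-class are conjugate, that is $e=ab$ and $f=ba$ for suitable $a,b\in M$, so condition~(i) forces $e\in P\iff f\in P$. Hence $G_{0}$ is a union of $\GD$-classes of idempotents and descends to a subset $\bar G_{0}$ of the finite poset $\bar E$ of $\GD$-classes of idempotents ordered by $\preceq$.

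On the genuine poset $\bar E$ the decomposition is routine. I would exhibit an order-preserving map $j\colon\bar E\to\mathbb{N}$ whose value is odd exactly on $\bar G_{0}$, constructed greedily along a linear extension of $\preceq$: on each class, set $j$ to whichever of $M_{e}$ or $M_{e}+1$ has the prescribed parity, where $M_{e}$ is the maximum of $j$ over the strict predecessors already treated. Putting $G_{k}=\{e\in E\mid j(e)\ge k\}$, each $G_{k}$ is an up-set as the super-level set of a monotone map, the family is nested, and the odd levels reconstruct $G_{0}$. Thus the main obstacles are the correspondence $G\mapsto\sigma(G)$ together with the Rees-quotient computation $0\notin\psi(x^{*})\iff\varphi(x)^{\omega}\in G$, and the saturation of $G_{0}$ by $\GD$-equivalence coming from condition~(i); once these are in place the combinatorial decomposition and the translation through $\sigma$ complete the proof.
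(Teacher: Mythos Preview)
The paper does not actually prove this proposition; it is quoted from \cite{BealCartonReutenauer1996} (the proof of their Theorem~10) and used as a black box. Your argument is therefore not competing with anything in the present paper, and on its own merits it is correct and pleasantly self-contained.

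The three ingredients you isolate are exactly the right ones. The reduction $X=\sigma(G_0)$ with $G_0=P\cap E$ is immediate from condition~(ii). Your identification of $\sigma(G)$, for $G$ a $\preceq$-up-set of idempotents, with the strongly cyclic set attached to the Rees quotient $M/I$ where $I=M(E\setminus G)M$ is sound: the computation $I\cap E=E\setminus G$ uses precisely that $G$ is an up-set, and the equivalence $0\notin\psi(x^*)\Leftrightarrow\varphi(x)^\omega\notin I$ needs only that $I$ is an ideal and that $1\notin I$ (which holds because any nonempty up-set of $E$ contains $1$). The use of condition~(i) to show that $G_0$ is $\GD$-saturated via the standard conjugacy of $\GD$-equivalent idempotents is also correct, and without it the poset construction of $j$ would fail, since distinct $\GD$-equivalent idempotents are $\preceq$-incomparable only in the quotient $\bar E$.

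Two small points worth making explicit. First, the chain produced by $j$ should be truncated at the last nonempty $G_k$: under the paper's definition a strongly cyclic set always contains the empty word, so $\sigma(\emptyset)=\emptyset$ is not strongly cyclic and cannot appear as a term $X_n$. Second, the greedy definition of $j$ works because along a linear extension every strict $\preceq$-predecessor of $e$ has already been assigned a value, so $M_e$ is well defined and $j$ is automatically monotone; it is worth saying this in one line rather than leaving it implicit.
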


\begin{example}\label{exampleStictly}
Consider the automaton $\A$, called the \emph{even automaton}, 
represented in Figure~\ref{figureEven1} on the left. The automaton on
the right will be used below.
\begin{figure}[hbt]
\centering\gasset{Nadjust=wh}
\begin{picture}(50,15)(0,-3)
\node(1)(0,0){$1$}\node(2)(20,0){$2$}

\drawloop(1){$b$}\drawedge[curvedepth=5](1,2){$a$}
\drawedge[curvedepth=5](2,1){$a$}


\node(12)(40,0){$1,2$}

\drawloop(12){$-a$}
\end{picture}
\caption{The even automaton}\label{figureEven1}
\end{figure}
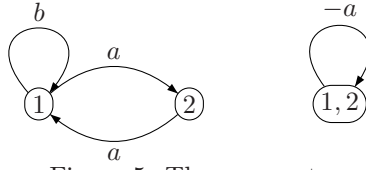
Let $X$ be the set of cyclically nonzero words for this automaton. We
have
\begin{displaymath}
X=a^*\cup (a^2)^*b\{aa,b\}^*\cup a(a^2)^*b\{aa,b\}^*a.
\end{displaymath}
The set $X$ is the union of two cyclic sets $a^*$ and 
$(a^2)^*b\{aa,b\}^*\cup a(a^2)^*b\{aa,b\}^*
a
$. The first one is
strongly cyclic but the second is not.
\end{example}
\subsection{Complete reducibility of cyclic sets}

The following result is from~\cite{BerstelReutenauer1990} (Corollary 12.2.2
\cite{BerstelReutenauer2011}). 

\begin{theorem}\label{theoremCyclic}
A cyclic rational  set of words is completely reducible.
\end{theorem}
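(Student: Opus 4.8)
The plan is to establish the theorem from two independent ingredients, glued together by Proposition~\ref{propositionChristophe1}: first, that every \emph{trace series} is completely reducible; and second, that the characteristic series of a strongly cyclic set is a $\Z$-linear combination of trace series. The general case then reduces to the strongly cyclic one by Proposition~\ref{propositionStrictlyCyclic}: a chain of differences $(X_1-X_2)+(X_3-X_4)+\cdots$ of nested strongly cyclic sets $X_1\supset X_2\supset\cdots$ has characteristic series $\u(X)=\sum_i(-1)^{i+1}\u(X_i)$, and once each $\u(X_i)$ is known to be completely reducible, so is $\u(X)$ by Proposition~\ref{propositionChristophe1}.

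First I would record the key lemma: for any morphism $\nu:A^*\to K^{N\times N}$, the series $T_\nu$ with $(T_\nu,w)=\Trace(\nu(w))$ is completely reducible. Choosing a basis of $K^N$ adapted to a composition series of the $\nu$-module $K^N$ makes each $\nu(w)$ block upper triangular, so $\Trace(\nu(w))$ is the sum of the traces on the composition factors; hence $T_\nu=T_{\nu^{\mathrm{ss}}}$ for the semisimplification $\nu^{\mathrm{ss}}$, which is completely reducible. Therefore the image algebra $\B=\nu^{\mathrm{ss}}(K\langle A\rangle)$ acts faithfully and completely reducibly on the direct sum of the composition factors and is thus semisimple. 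Since $\nu^{\mathrm{ss}}$ recognizes $T_\nu$ through the linear form obtained by restricting $\Trace$ to $\B$, the series $T_\nu$ is recognized by a morphism onto a semisimple algebra and is therefore completely reducible (by Proposition~\ref{propUniversal} and the remark following it). This step is valid in any characteristic.

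Next I would treat a strongly cyclic set $X$. By the characterization of strongly cyclic sets, $X$ is the set of cyclically nonzero words of a deterministic automaton $\A=(Q,i,T)$; writing $\mu$ for the associated representation, $x\in X$ exactly when $\mu(x)$ is not nilpotent. Invoking B\'eal's external powers $\wedge^k\mu$ (the representations attached to the external power automata, which are again morphisms), I would prove the identity
\begin{displaymath}
\u(X)=\sum_{k=1}^{|Q|}(-1)^{k+1}\,\Trace\bigl(\wedge^k\mu(\cdot)\bigr).
\end{displaymath}
This follows from the functional graph of the partial map $\mu(x)$: its nonzero eigenvalues are the $\ell_j$-th roots of unity running over the cycles of lengths $\ell_j$, so $\det(I-\mu(x))=\prod_j\prod_{\zeta^{\ell_j}=1}(1-\zeta)$, which vanishes as soon as a cycle is present (the factor at $\zeta=1$ is zero) and equals $1$ when $\mu(x)$ is nilpotent. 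Since $\det(I-\mu(x))=\sum_{k\ge0}(-1)^k\Trace(\wedge^k\mu(x))$, rearranging yields the displayed formula. Each summand is a trace series, hence completely reducible by the lemma, and Proposition~\ref{propositionChristophe1} makes $\u(X)$ completely reducible.

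The main obstacle is precisely this last identity: one must verify, through the cycle structure of the partial functions $\mu(x)$ and the evaluation of the characteristic polynomial, that the indicator of the cyclically nonzero words is captured exactly by the alternating sum of external-power traces. The trace lemma and the passage from strongly cyclic to cyclic are comparatively routine given Propositions~\ref{propositionChristophe1} and~\ref{propositionStrictlyCyclic}. I would also note that, unlike the birecurrent case, no hypothesis on the characteristic of $K$ is needed here, since both the trace lemma and the determinant computation are characteristic-free.
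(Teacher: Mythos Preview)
Your proof follows the same overall architecture as the paper's: reduce to strongly cyclic sets via Proposition~\ref{propositionStrictlyCyclic}, express the characteristic series of a strongly cyclic set as the alternating sum of traces of the external powers, and conclude via the semisimplicity of (linear combinations of) trace series. The paper packages the last step as Proposition~\ref{propositionChristophe} (quoted from~\cite{BerstelReutenauer1990}) and then uses an induction on the length of the chain; you instead prove the trace lemma directly via the semisimplification $\nu^{\mathrm{ss}}$ and appeal to Proposition~\ref{propositionChristophe1}. Both routes are correct and characteristic-free, as you note.

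The one genuine difference lies in the verification of the identity $\u(X)=\sum_{k\ge 1}(-1)^{k+1}\Trace(\wedge^k\mu(\cdot))$ for a strongly cyclic $X$ (the paper's Equation~\eqref{eqTrace}). The paper establishes it by invoking a combinatorial lemma on permutations (Lemma~\ref{lemmaLindMarcus}, taken from Lind--Marcus), summing signatures over the invariant subsets of the partial map $\mu(x)$. Your argument via $\det(I-\mu(x))=\sum_{k\ge 0}(-1)^k\Trace(\wedge^k\mu(x))$ is a clean replacement: the characteristic polynomial of $\mu(x)$ factors as $t^m\prod_j(t^{\ell_j}-1)$ over the cycles, so $\det(I-\mu(x))$ vanishes exactly when a cycle is present and equals $1$ when $\mu(x)$ is nilpotent, giving $(\u(X),x)=1-\det(I-\mu(x))$ directly. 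Nothing is gained or lost in scope, but your determinant computation is self-contained and avoids importing the external combinatorial lemma.
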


A series $S$ is a \emph{trace series} if there exists a linear
representation $\mu$ of $A^*$ such that for any $w\in A^*$
\begin{displaymath}
(S,w)=\Trace(\mu w).
\end{displaymath}
The following is from~\cite{BerstelReutenauer1990} 
(it is Lemma 12.2.3 in~\cite{BerstelReutenauer2011}).
\begin{proposition}\label{propositionChristophe}
The syntactic algebra of a linear combination of trace
series is semisimple.
\end{proposition}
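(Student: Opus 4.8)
The plan is to exploit the fact that the trace of a finite-dimensional representation depends only on its composition factors, so that a trace series is already the character of a completely reducible representation; one then recognizes the given linear combination by a morphism onto a semisimple algebra and invokes Proposition~\ref{propUniversal}.

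First I would fix notation: write the linear combination as $T=\sum_{i=1}^k\alpha_i S_i$ with $\alpha_i\in K$ and each $S_i$ a trace series, say $(S_i,w)=\Trace(\mu_i(w))$ for a representation $\mu_i:A^*\to K^{n_i\times n_i}$. The crucial observation is that for each $i$ one may replace $\mu_i$ by its semisimplification $\mu_i^{\mathrm{ss}}$, the direct sum of the composition factors of $\mu_i$. Indeed, putting $\mu_i$ in block upper-triangular form along a composition series shows that $\Trace(\mu_i(w))=\Trace(\mu_i^{\mathrm{ss}}(w))$ for every $w$, since the trace of a block-triangular matrix is the sum of the traces of its diagonal blocks. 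This identity is valid over an arbitrary field. Hence $(S_i,w)=\Trace(\mu_i^{\mathrm{ss}}(w))$, and each $\mu_i^{\mathrm{ss}}$ is completely reducible by construction.

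Next I would assemble the block-diagonal representation $\nu=\bigoplus_{i=1}^k\mu_i^{\mathrm{ss}}$ acting on $V=\bigoplus_i V_i^{\mathrm{ss}}$. By iterated use of Lemma~\ref{lemmaSubDirect}, $\nu$ is completely reducible, so the algebra $\Agoth=\nu(K\langle A\rangle)\subset\End(V)$ carries a faithful completely reducible representation and is therefore semisimple, by the properties of semisimple algebras recalled above. Moreover $\nu$ recognizes $T$: the linear functional $\pi$ on $\Agoth$ sending a block-diagonal matrix with diagonal blocks $M_1,\dots,M_k$ to $\sum_i\alpha_i\Trace(M_i)$ satisfies $\pi(\nu(w))=\sum_i\alpha_i\Trace(\mu_i^{\mathrm{ss}}(w))=(T,w)$ for all $w\in A^*$.

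Finally, since $\nu$ is by construction a surjective morphism from $K\langle A\rangle$ onto the semisimple algebra $\Agoth$ recognizing $T$, Proposition~\ref{propUniversal} yields a surjection from $\Agoth$ onto the syntactic algebra $\Agoth_T$; as a quotient of a semisimple algebra is semisimple, $\Agoth_T$ is semisimple, which is the assertion. The main obstacle is the first step, namely verifying that passing to the semisimplification leaves the trace series unchanged, since a subalgebra of a semisimple algebra (such as $\mu_i(K\langle A\rangle)$ before semisimplifying) need not itself be semisimple; everything afterwards is a matter of packaging the data so that Proposition~\ref{propUniversal} applies. I would also take care that both the semisimplification identity and the identification of $\Agoth$ as semisimple hold in arbitrary characteristic, so that no hypothesis on $K$ is needed beyond finite-dimensionality of the representations.
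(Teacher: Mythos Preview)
Your argument is correct. The paper does not actually prove Proposition~\ref{propositionChristophe}; it merely cites it as Lemma~12.2.3 of~\cite{BerstelReutenauer2011} (originating in~\cite{BerstelReutenauer1990}). Your proof---semisimplify each $\mu_i$ to leave the trace unchanged, form the completely reducible block sum $\nu$, observe that $\nu(K\langle A\rangle)$ is semisimple because it acts faithfully and completely reducibly on $V$, and then quotient down to $\Agoth_T$ via Proposition~\ref{propUniversal}---is in fact the standard argument given in~\cite{BerstelReutenauer2011}, so there is nothing to contrast. The one point worth stating more explicitly is that a subspace of $V$ is invariant under the monoid $\nu(A^*)$ if and only if it is invariant under its linear span $\Agoth$, so that complete reducibility of $\nu$ as a monoid representation really does give complete reducibility of the faithful $\Agoth$-module $V$; you use this implicitly when invoking the ``faithful completely reducible $\Rightarrow$ semisimple'' principle.
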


Let $\A$ be a finite deterministic automaton with set of states
$Q=\{1,2,\ldots,n\}$ on the alphabet $A$. Following~\cite{Beal1995},
 for $1\le k\le n$, the \emph{external power} of
order $k$ of $\A$ is the  weighted automaton $\A_k$  defined as
follows.
Its set of states is the set
$Q_k$ of sequences of integers $(i_1,i_2,\ldots,i_k)$ such that $1\le
i_1<i_2<\cdots <i_k\le n$. The edges are labeled in $A\cup -A$.
There is a transition by  $\varepsilon a$
from $(i_1,i_2,\ldots,i_k)$ to $(j_1,j_2,\ldots,j_k)$ if and only if
$(j_1,j_2,\ldots,j_k)$ is obtained from $(i_1\cdot a,i_2\cdot
a,\ldots,i_k\cdot a)$ by a permutation of signature $\varepsilon$.
\begin{example}
Let $\A$ be the even automaton of Example~\ref{exampleStictly}. The
external power $\A_2$ is represented in Figure~\ref{figureEven1} on the right.
\end{example}

The following combinatorial lemma on permutations is Lemma 6.4.9 
in~\cite{LindMarcus1995}.
\begin{lemma}\label{lemmaLindMarcus}
Let $\pi$ be a permutation of a finite set $P$ and let 
${\GR}=\{R\subset P\mid R\ne\emptyset,\pi(R)=R\}$. Then
\begin{displaymath}
\sum_{R\in \GR}(-1)^{\Card(R)+1}\varepsilon(\pi,R)=1
\end{displaymath}
where $\varepsilon(\pi,R)$ denotes the signature of the restriction of
$\pi$ to the set $R$.
\end{lemma}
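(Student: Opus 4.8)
The plan is to reduce the identity to the cycle structure of $\pi$. I would first decompose $\pi$ into its disjoint cycles $c_1,\dots,c_m$, of lengths $\ell_1,\dots,\ell_m$; these cycles partition $P$. A nonempty subset $R\subset P$ satisfies $\pi(R)=R$ if and only if $R$ is a nonempty union of some of the $c_j$, because the full $\pi$-orbit of any point of $R$ must already lie in $R$ (and $P$ is finite), while a union of cycles is clearly invariant. Hence the elements of $\GR$ are in bijection with the nonempty subsets $C\subseteq\{1,\dots,m\}$ via $R=\bigcup_{j\in C}c_j$.

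Next I would evaluate each summand through this bijection. For the $R$ corresponding to $C$ one has $\Card(R)=\sum_{j\in C}\ell_j$, and since $\pi$ restricted to $R$ is the product of the disjoint cycles $c_j$ with $j\in C$, and an $\ell$-cycle has signature $(-1)^{\ell-1}$, one gets $\varepsilon(\pi,R)=\prod_{j\in C}(-1)^{\ell_j-1}=(-1)^{\Card(R)-\Card(C)}$. Substituting, the summand collapses to a quantity depending only on the number of cycles chosen:
\[
(-1)^{\Card(R)+1}\,\varepsilon(\pi,R)=(-1)^{\Card(R)+1}(-1)^{\Card(R)-\Card(C)}=(-1)^{\Card(C)+1}.
\]

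Finally the whole sum becomes $\sum_{\emptyset\ne C\subseteq\{1,\dots,m\}}(-1)^{\Card(C)+1}$. Rewriting this as $-\bigl(\sum_{C\subseteq\{1,\dots,m\}}(-1)^{\Card(C)}-1\bigr)$ and invoking the binomial identity $\sum_{C\subseteq\{1,\dots,m\}}(-1)^{\Card(C)}=(1-1)^m=0$, valid since $P\ne\emptyset$ forces $m\ge 1$, the total equals $-(0-1)=1$, as claimed. The only delicate point is the sign bookkeeping in the middle step: one must get the parity $\Card(R)-\Card(C)$ right and verify that the two factors of $(-1)^{\Card(R)}$ cancel so that only $\Card(C)$ survives. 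Everything else is routine, and it is worth remarking that the hypothesis $P\ne\emptyset$ is genuinely needed, since for the empty permutation the left-hand side is an empty sum and the identity fails.
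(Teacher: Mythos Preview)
Your proof is correct. The reduction to the cycle decomposition, the identification of $\GR$ with nonempty subsets of the set of cycles, the signature computation $\varepsilon(\pi,R)=(-1)^{\Card(R)-\Card(C)}$, and the final binomial evaluation are all valid, and your remark that $P\ne\emptyset$ is needed is accurate and matches how the lemma is actually invoked later in the paper.

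There is nothing to compare against: the paper does not supply its own proof of this lemma but simply cites it as Lemma~6.4.9 of Lind and Marcus. Your argument is the standard one and is essentially what one finds in that reference.
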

We use Lemma~\ref{lemmaLindMarcus} to prove the following result.
\begin{proposition}\label{propStronglyTrace}
If $X$ is a strongly cyclic rational set, the series $\u(X)$ is a linear combination
of trace series.
\end{proposition}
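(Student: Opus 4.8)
The plan is to connect the combinatorial structure of a strongly cyclic set $X$ to the external powers $\A_k$ of the underlying automaton, and then to recognize the resulting weighted counts as traces. Recall that $X$ is the set of cyclically nonzero words defined by a deterministic automaton $\A$ on states $Q=\{1,\dots,n\}$, meaning $x\in X$ iff $q\cdot x^m\ne\emptyset$ for some fixed state $q$ and all $m\ge 0$. First I would fix a word $w$ and analyze the partial map $\varphi_\A(w)$ acting on $Q$. The condition ``$x\in X$'' is really a statement about the eventual behavior of iterating this partial map: $w\in X$ precisely when the partial permutation induced by $w$ on its ``recurrent'' states is nonempty, i.e. when $w$ stabilizes some nonempty subset $R\subseteq Q$ on which it acts as a genuine permutation. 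This is where Lemma~\ref{lemmaLindMarcus} enters.

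The key step is to apply Lemma~\ref{lemmaLindMarcus} to the permutation $\pi$ that $w$ induces on the set of its periodic points, with $P$ taken to be that set of periodic points. The lemma gives
\begin{displaymath}
\sum_{R}(-1)^{\Card(R)+1}\varepsilon(\pi,R)=1
\end{displaymath}
summed over the nonempty $\pi$-invariant subsets $R$. The crucial observation I would establish is that each such invariant subset $R$ of size $k$, together with the signature $\varepsilon(\pi,R)$, is exactly what the external power automaton $\A_k$ records: a state of $\A_k$ is an increasing $k$-tuple (a $k$-subset of $Q$), and the weighted transition by $w$ carries signs recording the signature of the induced permutation. So the diagonal entry of $\mu_k(w)$ at the state $R$ is $\pm 1$ exactly when $w$ permutes $R$, with sign $\varepsilon(\pi,R)$, and the trace of $\mu_k(w)$ collects all such $k$-subsets. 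Matching the combinatorics, I expect to obtain, for each $w$ in $X$ (so that $P\ne\emptyset$ and the sum is over a nonempty collection),
\begin{displaymath}
\sum_{k=1}^n (-1)^{k+1}\Trace(\mu_k(w)) = 1,
\end{displaymath}
while for $w\notin X$ the set of periodic points is empty and every $\mu_k(w)$ has zero trace, so the alternating sum is $0$. This identifies $\u(X)=\sum_{k=1}^n(-1)^{k+1}S_k$ where $S_k$ is the trace series of $\A_k$, exhibiting $\u(X)$ as a $\Z$-linear combination of trace series.

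The main obstacle I anticipate is the bookkeeping that makes the sign $\varepsilon(\pi,R)$ in Lemma~\ref{lemmaLindMarcus} coincide exactly with the weight that $\mu_k(w)$ assigns to the self-loop at the state $R$ in $\A_k$. One must check that a word $w$ fixing the $k$-subset $R$ as a set contributes the signature of its permutation of $R$ to the diagonal of $\mu_k(w)$, while $k$-subsets not stabilized by $w$ contribute nothing to the trace; this requires unwinding the definition of the external power, where a transition carries label $\varepsilon a$ with $\varepsilon$ the signature of the reordering permutation. A secondary subtlety is reducing the sum in the lemma, which is indexed by all invariant subsets of the periodic points, to a sum over $k$-subsets of the full state set $Q$ graded by size: subsets containing a non-periodic point never return to themselves under iteration of $w$, so they contribute $0$ to each trace, and the two sums agree. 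Once these sign and support matchings are verified, the conclusion follows directly from Lemma~\ref{lemmaLindMarcus} applied pointwise in $w$.
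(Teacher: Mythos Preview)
Your proposal is correct and follows essentially the same approach as the paper: both establish the identity $(\u(X),w)=\sum_{k=1}^n(-1)^{k+1}\Trace(\mu_k(w))$ by taking $P$ to be the set of states that $w$ permutes (your ``periodic points''), applying Lemma~\ref{lemmaLindMarcus} to the invariant subsets of $P$ when $w\in X$, and observing that all traces vanish when $w\notin X$. Your anticipated bookkeeping checks (matching the signature in the lemma with the diagonal weight in $\A_k$, and noting that $k$-subsets not contained in $P$ contribute nothing) are exactly the details the paper handles implicitly via Equation~\eqref{eqTracei}.
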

\begin{proof}
Let $\A$ be a deterministic
automaton on the set $Q=\{1,2,\ldots n\}$
such that $X$ is the set of cyclically nonzero words defined by $\A$.
Denote by $\A_i$ for $1\le i\le n$ the external power of $\A$
of order $k$.
We denote by $\Trace_i(w)$ the trace of a word $w$ with respect
to the automaton $\A_i$. We have
\begin{equation}
\Trace_i(w)=\sum_{q\in Q_{i,w}} \varepsilon_{q,w}\label{eqTracei}
\end{equation}
where $Q_{i,w}$ is the set of $q\in Q_i$ such that $q\cdot w$ differs
from $q$ by a permutation of signature $\varepsilon_{q,w}$.

We claim that for each $x\in A^*$
\begin{equation}
(\u(X),x)=\sum_{i=1}^n(-1)^{i+1}\Trace_i(x).\label{eqTrace}
\end{equation}
This will imply the result by Proposition~\ref{propositionChristophe}.

To prove \eqref{eqTrace}, assume first that $x\in X$. Let $P\subset Q$ be the
largest set such that $x$ defines a permutation $\pi$ of $P$. Since $x$
is cyclically nonzero, $P$ is not empty. For each
$i=1,\ldots n$, $\Trace_i(x)=\sum_{q\in Q_{i,x}} \varepsilon_{q,x}$ by
Equation
\eqref{eqTracei}. But the set $Q_{i,x}$ is the set
of sequences $q=(q_1,\ldots,q_i)$ with
$q_1<\ldots<q_i$ such that the set $R=\{q_1,\ldots,q_i\}$ satisfies
$\pi(R)=R$.  These sequences are thus in bijection with the sets $R$
in $\GR=\{R\subset P\mid R\ne\emptyset,\pi(R)=R\}$. Thus
\begin{displaymath}
\sum_{i=1}^n(-1)^{i+1}\Trace_i(x)=\sum_{R\in\GR}(-1)^{\Card(R)+1}\epsilon(\pi,R)
\end{displaymath}
By Lemma~\ref{lemmaLindMarcus}
 the value of the right hand side is $1$. Thus we have
proved \eqref{eqTrace} for $x\in X$.

Next if $x\notin X$, then $\Trace_i(x)=0$ for all $i=1,\ldots n$.
Indeed, if  $\Trace_i(x)\ne 0$, there is a sequence $q_1,\ldots,q_i$
such that $q_1\cdot x=q_1,\ldots,q_i\cdot x=q_i$ and thus $x\in X$.
Thus the right handside of \eqref{eqTrace} is zero. This proves
\eqref{eqTrace} for $x\notin X$.
\end{proof}
We now give the proof of  Theorem~\ref{theoremCyclic}.
\begin{proof}
By Proposition~\ref{propositionStrictlyCyclic}, any cyclic rational set
is a chain of differences of strongly cyclic rational sets. We prove by induction
on the length $n$ of the chain that for any cyclic rational set $X$,
the series $\u(X)$ is a linear combination of trace series. By
Proposition~\ref{propositionChristophe} it implies the conclusion.

It is true when $n=0$ since then $X$ is empty.

Assume now that $n\ge 1$. Then $X=Y-Z$ where $Y$ is a strongly cyclic
rational set
and $Z\subset Y$ is a  chain of differences of length $n-1$ 
of strongly cyclic rational sets. by Proposition~\ref{propStronglyTrace}
$\u(Y)$ is a linear combination of trace series.
By induction hypothesis, $\u(Z)$ is
a linear combination of trace series. Since $\u(X)=\u(Y)-\u(Z)$,
 the same conclusion holds
for $\u(X)$.
\end{proof}

\begin{example}
Consider again the even automaton $\A$ represented in 
Figure~\ref{figureEven1} on the left. Let $X$ be the
set of cyclically nonzero words for $\A$.

 The minimal
automaton of $X$ is represented in Figure~\ref{figureMinimal}
\begin{figure}[hbt]
\centering\gasset{Nadjust=wh}
\begin{picture}(60,35)(0,-3)
\node[Nmarks=if,fangle=180](1)(0,0){$1$}
\node[Nmarks=f,fangle=180](2)(0,20){$2$}
\node[Nmarks=f](3)(20,0){$3$}\node(4)(20,20){$4$}
\node(5)(40,0){$5$}\node[Nmarks=f](6)(40,20){$6$}

\drawedge[curvedepth=5](1,2){$a$}\drawedge[curvedepth=5](2,1){$a$}
\drawedge(1,3){$b$}\drawedge(2,4){$b$}
\drawloop(4){$b$}\drawloop(3){$b$}
\drawedge[curvedepth=5](4,6){$a$}\drawedge[curvedepth=5](6,4){$a$}
\drawedge[curvedepth=5](3,5){$a$}\drawedge[curvedepth=5](5,3){$a$}
\end{picture}
\caption{The minimal automaton of $X$.}\label{figureMinimal}
\end{figure}
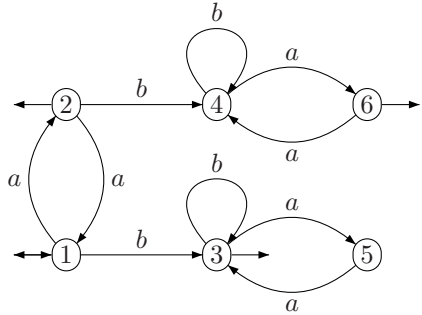

To obtain the
syntactic representation of $\u(X)$ we write
the linear representation associated with the automaton $\A$
 in the basis formed of the row vectors
\begin{displaymath}
1-3-6,\quad 2-4-5,\quad 3,\quad 5,\quad 4,\quad 6
\end{displaymath}
\begin{displaymath}
\psi(a)=\begin{bmatrix}0&1&0&0&0&0\\1&0&0&0&0&0\\0&0&0&1&0&0\\0&0&1&0&0&0\\
0&0&0&0&0&1\\0&0&0&0&1&0
\end{bmatrix},\quad
\psi(b)=\begin{bmatrix}0&0&0&0&0&0\\0&0&0&0&0&0\\0&0&1&0&0&0\\0&0&0&0&0&0\\
0&0&0&0&1&0\\0&0&0&0&0&0
\end{bmatrix}
\end{displaymath}
The initial and terminal vectors in this basis are
\begin{displaymath}
\lambda=\begin{bmatrix}1&0&1&0&0&1\end{bmatrix},\quad
\gamma=\begin{bmatrix}-1&1&1&0&0&1\end{bmatrix}^t
\end{displaymath}
In this way, the representation is a direct sum of three representations
of degrees $2,2,2$.
The first one is equivalent to a representation of degree $1$.

Thus the syntactic representation is the direct sum of three representations
of dimensions $1,2,2$. The first one is the linear representation
associated
with $\A_2$. The two other ones are equal to the linear representation
associated with $\A$ in such a way that the pair recognizes the trace
of the associated matrices.

\end{example}

\subsection{Characters of monoids}\label{sectionCharacters}
Let $M$ be a monoid.  A \emph{character}
on $M$  is a map of the form $m\mapsto \Trace(\rho m)$ where
$\rho:M\rightarrow\End(V)$ is a linear representation of $M$ 
over a finite dimensional vector space $V$. 
The character is irreducible if the representation is irreducible.
Any character is a sum of irreducible characters.

If $\varphi:A^*\rightarrow M$ is a morphism and $\chi$ is a character,
then $\chi\varphi$ is a completely reducible series. Indeed, this
is true if $\chi$ is irreducible and the general case follows from
the fact that any linear combination of completely reducible series
is completely reducible (Proposition~\ref{propositionChristophe1}).

The following result is from~\cite{McAlister1972}. It is proved for
$K=\C$ but the proof works for an algebraically closed field $K$
\cite{Steinberg2013}.
For an element $m$ of a finite semigroup $M$, we denote $m^\omega$
the idempotent of the semigroup generated by $m$.
\begin{theorem}\label{theoremMcAllistair}
Let $M$ be a finite monoid and let $K$ be an
algebraically closed field.
A map $f:M\rightarrow K$ is a linear combination of irreducible
characters if and only if
\begin{enumerate}
\item[\rm(i)] $f(xy)=f(yx)$ for any $x,y\in M$,
\item[\rm(ii)] $f(x^\omega x)=f(x)$ for any $x\in M$.
\end{enumerate}
\end{theorem}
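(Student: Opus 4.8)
The plan is to prove the two directions separately: necessity by a direct trace computation, and sufficiency by reducing $f$ to the representation theory of the maximal subgroups of $M$ via a dimension count.

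For necessity it suffices by linearity to treat a single character $\chi(m)=\Trace(\rho(m))$ attached to a representation $\rho:M\to\End(V)$. Condition (i) is immediate from the cyclic invariance of the trace, since $\Trace(\rho(x)\rho(y))=\Trace(\rho(y)\rho(x))$. For condition (ii) I would fix $x\in M$ and write $e=x^\omega=x^k$ for a suitable $k\ge 1$; then $\rho(e)$ is an idempotent endomorphism commuting with $\rho(x)$, and $V=\mathrm{Im}\,\rho(e)\oplus\ker\rho(e)$ is a decomposition into $\rho(x)$-invariant subspaces. On $\mathrm{Im}\,\rho(e)$ the operator $\rho(e)$ is the identity, so $\rho(x^\omega x)$ and $\rho(x)$ agree there; on $\ker\rho(e)$ we have $\rho(x)^k=\rho(e)=0$, so $\rho(x)$ is nilpotent and $\rho(x^\omega x)=\rho(e)\rho(x)=0$, both of trace $0$. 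Hence $\Trace(\rho(x^\omega x))=\Trace(\rho(x))$, giving (ii) and, after summing, for every linear combination of characters.

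For sufficiency, let $f$ satisfy (i) and (ii) and extend it linearly to a functional $\tilde f$ on the monoid algebra $K[M]$. Condition (i) says exactly that $\tilde f$ is a trace form, i.e. it vanishes on the commutator subspace $[K[M],K[M]]$; condition (ii) says $\tilde f(x-x^\omega x)=0$ for all $x\in M$. The goal is to show these relations force $\tilde f$ to vanish on the radical $J$ of $K[M]$, so that $\tilde f$ factors through the semisimple quotient $K[M]/J$, where every trace form is a combination of characters of the simple modules, and these pull back to irreducible characters of $M$. I would organize this as a dimension count: necessity already gives that the span of the irreducible characters lies inside the space $\mathcal C$ of functions satisfying (i) and (ii), and since distinct irreducible characters are linearly independent, it remains only to show $\dim_K\mathcal C$ does not exceed the number of irreducible characters of $M$.

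Both quantities come from the same data. By (ii) iterated, $f$ is determined by its values on the group elements $\bigcup_e H_e$, the maximal subgroups attached to the idempotents, because $x^\omega x\in H_{x^\omega}$; by (i) this restriction is a class function on each $H_e$ and takes equal values on corresponding elements of conjugate idempotents within a $\GD$-class. This bounds $\dim_K\mathcal C$ by $\sum_D \#\mathrm{Irr}(G_D)$, summed over the regular $\GD$-classes $D$ with maximal subgroup $G_D$ (using Maschke's theorem, and over a splitting field $\#\mathrm{Irr}(G_D)$ equals the number of conjugacy classes of $G_D$). On the other side, the classical Clifford--Munn--Ponizovskii description of the simple modules of a finite monoid algebra parametrizes the irreducible characters of $M$ by exactly such pairs (a regular $\GD$-class $D$, an irreducible character of $G_D$), giving the same total; matching the counts forces $\mathcal C=\mathrm{span}(\text{irreducible characters})$. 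The main obstacle is precisely this counting step: one must check that (i) and (ii) introduce no relations beyond those coming from the group parts, equivalently that the non-regular elements and the sandwich matrices of the principal factors contribute nothing, and one must be careful, when $K$ is not algebraically closed, to read "irreducible character" and "class function" over $K$, which is exactly the point where the reduction to the case $K=\mathbb C$ of the cited source is needed.
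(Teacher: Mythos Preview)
The paper does not prove this theorem: it merely cites McAlister's result for $K=\mathbb C$ and asserts, without argument, that the same proof works over any field of characteristic~$0$. There is thus no proof in the paper to compare your attempt against.

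Your necessity argument is correct. Your sufficiency strategy---restrict $f$ via (ii) to the group elements $\bigcup_e H_e$, use (i) to reduce to one class function per regular $\GD$-class, and match the resulting upper bound on $\dim\mathcal C$ against the Clifford--Munn--Ponizovski\u{\i} count of irreducible $M$-characters---is the standard route and goes through whenever $K$ is a splitting field for every maximal subgroup $G_D$, in particular for $K=\mathbb C$ or $K$ algebraically closed of characteristic~$0$.

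The subtlety you flag at the end, however, is not a mere technicality to be patched: as literally stated the theorem is \emph{false} over a non-splitting field. Already when $M=G$ is a finite group, condition (ii) is vacuous (since $x^\omega=1$) and (i) just says that $f$ is a class function; but for $G=\Z/3\Z$ and $K=\mathbb Q$ there are three class functions and only two irreducible $\mathbb Q$-characters (the trivial one and the $2$-dimensional one), so the inclusion is strict. Hence no argument can close the dimension count over general $K$, and the paper's assertion that the result extends to arbitrary characteristic-$0$ fields should be read as requiring $K$ to be a splitting field. This is harmless for the intended application to cyclic sets, where one may work over~$\mathbb C$.
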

This result gives an easy proof of theorem~\ref{theoremCyclic} 
over an algebraically closed field. Indeed, assume that $X$ is a  cyclic rational set
with syntactic morphism $\varphi:A^*\rightarrow M$. Let
$P=\varphi(X)$.
Then the characteristic function of $P$ satisfies the conditions
of Theorem~\ref{theoremMcAllistair}. This is clear for condition (i).
Next, $x^\omega x\in P$ implies that $x^n\in P$ for some $n\ge 1$
and thus implies $x\in P$. Conversely, if $x\in P$, then
$x^n\in P$ for all $n\ge 1$ and thus in particular
$x^\omega x\in P$. Thus condition (ii) is also
true.
Thus the characteristic function of $P$ is a linear combination
of irreducible characters. This implies that the characteristic
series of $X$ is a linear combination of trace series.

\bibliography{SyntacticAlgebra}
\bibliographystyle{plain}
\end{document}